\journal{Finite Fields and Their Applications}
\begin{document}

\def\bbbr{{\rm I\!R}} %reelle Zahlen
\def\bbbm{{\rm I\!M}}
\def\bbbn{{\rm I\!N}} %natuerliche Zahlen
\def\bbbf{{\rm I\!F}}
\def\bbbh{{\rm I\!H}}
\def\bbbk{{\rm I\!K}}
\def\bbbp{{\rm I\!P}}
\def\bbbz{{\mathchoice {\hbox{$\sf\textstyle Z\kern-0.4em Z$}}
{\hbox{$\sf\textstyle Z\kern-0.4em Z$}}
{\hbox{$\sf\scriptstyle Z\kern-0.3em Z$}}
{\hbox{$\sf\scriptscriptstyle Z\kern-0.2em Z$}}}}

\newtheorem{definition}{Definition}
\newtheorem{proposition}{Proposition}
\newtheorem{theorem}{Theorem}
\newtheorem{lemma}{Lemma}
\newtheorem{corollary}{Corollary}
\newtheorem{remark}{Remark}
\newtheorem{example}{Example}
\newtheorem{conjecture}{Conjecture}
\newtheorem{acknowledgments}{Acknowledgments}

\newenvironment{proof}{\begin{trivlist}\item[]{\em Proof: }}{\samepage \hfill{\hbox{\rlap{$\sqcap$}$\sqcup$}}\end{trivlist}}

\begin{frontmatter}

\title{A characterization of a class of optimal three-weight cyclic codes of dimension 3 over any finite field\tnoteref{t1}}
\tnotetext[t1]{Partially supported by PAPIIT-UNAM IN107515.}

\author{Gerardo Vega}

\address{Direcci\'on General de C\'omputo y de Tecnolog\'{\i}as de Informaci\'on y Comunicaci\'on, Uni\-ver\-si\-dad Nacional Aut\'onoma de M\'exico, 04510 M\'exico D.F., Mexico, gerardov@unam.mx}

\begin{abstract}
It is well known that the problem of determining the weight distributions of families of cyclic codes is, in general, notoriously difficult. An even harder problem is to find characterizations of families of cyclic codes in terms of their weight distributions. On the other hand, it is also well known that cyclic codes with few weights have a great practical importance in coding theory and cryptography. In particular, cyclic codes having three nonzero weights have been studied by several authors, however, most of these efforts focused on cyclic codes over a prime field. In this work we present a characterization of a class of optimal three-weight cyclic codes of dimension 3 over any finite field. The codes under this characterization are, indeed, optimal in the sense that their lengths reach the Griesmer lower bound for linear codes. Consequently, these codes reach, simultaneously, the best possible coding capacity, and also the best possible capabilities of error detection and correction for linear codes. But because they are cyclic in nature, they also possess a rich algebraic structure that can be utilized in a variety of ways, particularly, in the design of very efficient coding and decoding algorithms. What is also worth pointing out, is the simplicity of the necessary and sufficient numerical conditions that characterize our class of optimal three-weight cyclic codes. As we already pointed out, it is a hard problem to find this kind of characterizations. However, for this particular case the fundamental tool that allowed us to find our characterization was the characterization for all two-weight irreducible cyclic codes that was introduced by \cite{seis}. Lastly, another feature about the codes in this class, is that their duals seem to have always the same parameters as the best known linear codes.
\end{abstract}

\begin{keyword}
Cyclic codes \sep weight distribution \sep Gaussian sums \sep Griesmer lower bound
\MSC 11T71 \sep 11T55 \sep 12E20
\end{keyword}

\end{frontmatter}

\section{Introduction}
The weight distribution of a code is important because it plays a significant role in determining their capabilities of error detection and correction. For cyclic codes this problem gains greater interest due mainly to the fact that they possess a rich algebraic structure. However, as was pointed out by \cite{cerobis1}, the problem of determining the weight distributions of families of cyclic codes is, in general, notoriously difficult. An even harder problem is to find characterizations for these families of cyclic codes in terms of their weight distributions. In fact, very few characterizations of this kind are known. One of the first and most relevant efforts in that direction is the work of \cite{seis}, where simple necessary and sufficient numerical conditions for an irreducible cyclic code to have at most two weights, are presented. Along the same lines, there also exists a set of characterizations, for the one-weight irreducible cyclic codes that was introduced in \cite{ocho}. In the case of reducible cyclic codes a characterization for the two-weight projective cyclic codes, was recently presented by \cite{cerobis2}.   

On the other hand, it is also well known that cyclic codes with few weights have a great practical importance in coding theory and cryptography, and this is so because they are useful in the design of frequency hopping sequences and in the development of secret sharing schemes. Some work has already been done in relation with irreducible and reducible two-weight cyclic codes (see for example, \cite{seis}, \cite{nueve} and \cite{cerobis2}), however we believe that for the particular case of reducible three-weight cyclic codes, whose duals have two zeros, a more in-depth study is possible. For example, cyclic codes having three weights have been studied by several authors (see for example, \cite{diez}, \cite{once} and \cite{tres}), however most of the efforts in that direction have been focused on cyclic codes over a prime field. In this work we present a characterization of a class of optimal three-weight cyclic codes of dimension 3 over any finite field. The codes in this class are, indeed, optimal in the sense that their lengths reach the Griesmer lower bound for linear codes. Therefore, in addition to the rich algebraic structure that is intrinsically associated with all cyclic codes, our codes also reach the best possible coding capacity, and also they have the best possible capabilities for error detection and correction for linear codes. As a further result of this work, we also find the parameters for the dual code of any cyclic code in our characterized class. In fact, throughout several studied examples, it seems that such dual codes always have the same parameters as the best known linear codes.

\begin{center}
TABLE I \\
{\em Weight distribution of ${\cal C}_{((q+1)e_1,e_2)}$.}
\end{center}
\begin{center}
\begin{tabular}{|c|c|} \hline
{\bf Weight} & $\;$ {\bf Frequency} $\;$\\ \hline \hline
0 & 1 \\ \hline
$q(q-1)-1$ & $(q-1)(q^2-1)$ \\ \hline
$q(q-1)$ & $q^2-1$ \\ \hline
$q^2-1$ & $q-1$ \\ \hline 
\end{tabular}
\end{center}

In order to provide a detailed explanation of what is the main result of this work, let $q$ be the power of a prime number, and also let $\gamma$ be a fixed primitive element of $\bbbf_{q^2}$. For any integer $a$, denote by $h_a(x) \in \bbbf_{q}[x]$ the minimal polynomial of $\gamma^{-a}$. With this notation in mind, the following result gives a full description for the weight distribution of a class of optimal three-weight cyclic codes of length $q^2-1$, dimension 3 over the finite field $\bbbf_{q}$.

\begin{theorem}\label{teouno}
For any two integers $e_1$ and $e_2$, let ${\cal C}_{((q+1)e_1,e_2)}$ be the cyclic code, over $\bbbf_{q}$, whose parity-check polynomial is $h_{(q+1)e_1}(x)h_{e_2}(x)$. Thus, if $\gcd(q-1,2e_1-e_2)=1$ and $\gcd(q+1,e_2)=1$ then 

\begin{enumerate}
\item[{\em (A)}] $\deg(h_{(q+1)e_1}(x))=1$ and $\deg(h_{e_2}(x))=2$. In addition, $h_{(q+1)e_1}(x)$ and $h_{e_2}(x)$ are the parity-check polynomials of two different one-weight cyclic codes of length $q^2-1$, whose nonzero weights are, respectively, $q^2-1$ and $q(q-1)$.
\\ 
\item[{\em (B)}] ${\cal C}_{((q+1)e_1,e_2)}$ is an optimal three-weight $[q^2-1,3,q(q-1)-1]$ cyclic code over $\bbbf_{q}$, with the weight distribution given in Table I. In addition, if $B_j$, with $0<j\leq q^2-1$, is the number of words of weight $j$ in the dual code of ${\cal C}_{((q+1)e_1,e_2)}$, then $B_1=B_2=0$ and 

$$B_3=\frac{(q^2-3)(q^2-1)(q-2)(q-1)}{6} \; .$$

\noindent
Therefore, if $q>2$, then the dual code of ${\cal C}_{((q+1)e_1,e_2)}$ is a single-error-correcting cyclic code with parameters $[q^2-1,q^2-1-3,3]$.
\end{enumerate}
\end{theorem}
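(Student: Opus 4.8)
The plan is to realise every codeword through the standard trace representation and then reduce the weight computation to counting, on each fibre of the norm map, the solutions of a quadratic on the norm-one ``circle''. First I would settle the degrees in Part (A): $\gamma^{(q+1)e_1}$ is fixed by the Frobenius $x\mapsto x^{q}$, hence lies in $\bbbf_{q}$, so $\deg h_{(q+1)e_1}=1$; and $\gcd(q+1,e_2)=1$ forces $(q+1)\nmid e_2$, so $\gamma^{-e_2}\notin\bbbf_{q}$ has a quadratic minimal polynomial, giving $\deg h_{e_2}=2$ and two coprime constituents. Writing $\beta=\gamma^{(q+1)e_1}\in\bbbf_{q}^{*}$, the generic codeword is $c(a_1,a_2)_i=a_1\beta^{-i}+\mathrm{Tr}_{\bbbf_{q^2}/\bbbf_{q}}(a_2\gamma^{-e_2 i})$ with $a_1\in\bbbf_{q},\,a_2\in\bbbf_{q^2}$, so $\dim=3$. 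Re-indexing by $y=\gamma^{-i}\in\bbbf_{q^2}^{*}$ turns the number of zero coordinates into $Z(a_1,a_2)=\#\{y\in\bbbf_{q^2}^{*}: a_1\,\mathrm{N}(y)^{e_1}+\mathrm{Tr}(a_2 y^{e_2})=0\}$ with $\mathrm{N}(y)=y^{q+1}$, and the weight is $q^2-1-Z$.

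The case $a_2=0$ gives full weight $q^2-1$ (frequency $q-1$) at once, and everything else comes from one uniform count. I would partition $\bbbf_{q^2}^{*}$ into the $q-1$ fibres of $\mathrm{N}$; on the fibre $\mathrm{N}(y)=\nu$ write $y=y_0u$ with $u$ in the cyclic circle group $U=\ker\mathrm{N}$ of order $q+1$. Because $\gcd(e_2,q+1)=1$, the map $u\mapsto u^{e_2}$ permutes $U$, so the fibre contributes $\#\{u\in U:\mathrm{Tr}(wu)=-a_1\nu^{e_1}\}$ with $w=a_2 y_0^{e_2}$. Using $u^{q}=u^{-1}$ on $U$, this becomes the quadratic $wu^{2}+(a_1\nu^{e_1})u+w^{q}=0$, whose roots have product $w^{q}/w\in U$; hence both roots lie in $U$ or neither does. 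I would check the root count equals $1-\eta(D_\nu)$ in odd characteristic, with $\eta$ the quadratic character of $\bbbf_{q}$ and discriminant $D_\nu=a_1^{2}\nu^{2e_1}-4\,\mathrm{N}(a_2)\nu^{e_2}$, while in characteristic $2$ it is $2$ or $0$ according to whether $\mathrm{Tr}_{\bbbf_{q}/\bbbf_{2}}\big(\mathrm{N}(w)/(a_1\nu^{e_1})^{2}\big)$ is $1$ or $0$.

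Summing over fibres reduces the decisive case $a_1,a_2\neq0$ to an elementary character sum. In odd characteristic, factoring $\eta(D_\nu)=\eta(\nu^{2e_1-e_2})\,\eta\big(a_1^{2}\nu^{2e_1-e_2}-4\mathrm{N}(a_2)\big)$ and substituting $\mu=\nu^{2e_1-e_2}$ --- a bijection of $\bbbf_{q}^{*}$ precisely because $\gcd(q-1,2e_1-e_2)=1$ --- collapse the sum to $\sum_{\mu\in\bbbf_{q}}\eta\big(\mu(a_1^{2}\mu-4\mathrm{N}(a_2))\big)=-\eta(a_1^{2})=-1$, so $Z=q$ and the weight is $q(q-1)-1$. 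The same bijection works in characteristic $2$, where the fibre count becomes $2\cdot\#\{t\in\bbbf_{q}^{*}:\mathrm{Tr}_{\bbbf_{q}/\bbbf_{2}}(t)=1\}=q$ by balancedness of the absolute trace. The slice $a_1=0$ is handled by the same count with vanishing linear term; here $\gcd(q+1,e_2)=1$ forces $e_2$ odd (since $2\mid q+1$ when $q$ is odd), which makes the residual sum $\sum_\nu\eta(\nu)^{e_2}$ vanish and yields the constant weight $q(q-1)$, establishing the one-weight constituent of Part (A) and completing Table I. I expect this fibre-to-quadratic reduction to be the main obstacle, in particular arranging for the two coprimality hypotheses to do exactly the right work --- one to permute the circle $U$, the other to linearise the outer sum --- with the even-characteristic case requiring a parallel argument, since the quadratic-character bookkeeping is there replaced by absolute-trace conditions.

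For the remaining assertions I would proceed as follows. Optimality is immediate once $[q^2-1,3,q(q-1)-1]$ is in hand, since $\sum_{i=0}^{2}\lceil (q(q-1)-1)/q^{i}\rceil=(q^2-q-1)+(q-1)+1=q^2-1$ meets the Griesmer bound with equality. For the dual I would first show the code is projective: two coordinates $i\neq j$ give proportional functionals only if $(q+1)\mid e_2(i-j)$, forcing $(q+1)\mid(i-j)$, and then $(q-1)\mid(2e_1-e_2)\ell$ where $i-j=(q+1)\ell$, forcing $(q-1)\mid\ell$; together these give $(q^2-1)\mid(i-j)$, impossible for distinct coordinates, whence $B_1=B_2=0$. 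Finally I would feed the complete distribution of Table~I into the first four Pless power moments, which with $B_1=B_2=0$ determine $B_3$ uniquely and give the stated value. Since that value is strictly positive exactly when $q>2$, the dual is a single-error-correcting $[q^2-1,\,q^2-4,\,3]$ code in that range, while $q=2$ makes $B_3=0$ and is correctly excluded.
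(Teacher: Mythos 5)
Your proposal is correct, and its core is genuinely different from the paper's. The paper reaches Table~I through character-sum machinery: Delsarte's theorem gives the trace representation, the zero-count reduces to the sums $T_{(e_1,e_2)}(a,b)=\sum_{y}S_{(e_1,e_2)}(ya,yb)$, and these are evaluated via the lift of multiplicative characters, the Davenport--Hasse theorem, and the quadratic Gauss sum formulas of Lidl--Niederreiter (Lemmas~\ref{lemados}--\ref{lematres} and Corollary~\ref{coruno}). You instead fibre $\bbbf_{q^2}^{*}$ over the norm map and convert each fibre count into the number of elements of prescribed norm and trace, i.e.\ into the splitting behaviour of a quadratic $X^2-cX+\mathrm{N}(w)$ over $\bbbf_{q}$; the two coprimality hypotheses then enter exactly once each (one to let $u\mapsto u^{e_2}$ permute the circle $\ker\mathrm{N}$, the other to make $\nu\mapsto\nu^{2e_1-e_2}$ a bijection of $\bbbf_{q}^{*}$), and the decisive case collapses to the elementary identity $\sum_{\mu}\eta(a_1^{2}\mu^{2}-4\mathrm{N}(a_2)\mu)=-1$, with a balanced-trace count replacing it in characteristic $2$. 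This is more elementary and self-contained -- it also gives the constant weight $q(q-1)$ of the $h_{e_2}$-constituent directly, where the paper cites its one-weight characterization \cite{ocho} -- but note what the paper's heavier route buys: Lemma~\ref{lematres} is an \emph{equivalence} ($T_{(e_1,e_2)}(a,b)=1$ iff $\gcd(q-1,2e_1-e_2)=1$), which is reused verbatim in Section~6 to prove the converse characterization (Theorem~\ref{teocinco}); your one-directional count would have to be redone or sharpened there. For the remaining claims you agree with the paper on Griesmer optimality, and for the dual you add a projectivity argument for $B_1=B_2=0$ that the paper gets for free from the Pless moments; both are sound, and your closing observation that $B_3>0$ exactly when $q>2$ is the right justification for the parameters $[q^2-1,q^2-4,3]$.
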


For the particular case when $q$ is an even integer, and in connection with the class cyclic codes given by Theorem \ref{teouno}, the following result was recently presented in \cite{tresbis}. 

\begin{theorem}\label{teocuatro}
With our notation, suppose that $q$ is an even integer. Then ${\cal C}_{(q+1,q-1)}$ is a three-weight $[q^2-1,3,q(q-1)-1]$ cyclic code over $\bbbf_{q}$, with the weight distribution given in Table I.
\end{theorem}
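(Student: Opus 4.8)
The plan is to recognize Theorem \ref{teocuatro} as nothing more than the special instance of Theorem \ref{teouno} obtained by setting $e_1=1$ and $e_2=q-1$. With these choices we have $(q+1)e_1=q+1$, so that ${\cal C}_{((q+1)e_1,e_2)}$ is literally the code ${\cal C}_{(q+1,q-1)}$. Since Theorem \ref{teouno} already supplies the parameters $[q^2-1,3,q(q-1)-1]$ together with the full weight distribution of Table I under the two numerical hypotheses $\gcd(q-1,2e_1-e_2)=1$ and $\gcd(q+1,e_2)=1$, the entire proof reduces to checking that these two coprimality conditions hold automatically once $q$ is assumed to be even.

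First I would verify the two gcd conditions by elementary reductions. For the first, with $e_1=1$ and $e_2=q-1$ we have $2e_1-e_2=3-q\equiv 2\pmod{q-1}$, so $\gcd(q-1,2e_1-e_2)=\gcd(q-1,2)$; since $q$ is even, $q-1$ is odd and this gcd equals $1$. For the second, $e_2=q-1\equiv -2\pmod{q+1}$, so $\gcd(q+1,e_2)=\gcd(q+1,2)$, which is again $1$ because $q+1$ is odd. Thus the parity of $q$ is precisely what forces both $q-1$ and $q+1$ to be coprime to $2$, and both hypotheses of Theorem \ref{teouno} are satisfied.

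Having verified both conditions, I would simply invoke part (B) of Theorem \ref{teouno} to conclude that ${\cal C}_{(q+1,q-1)}$ is a three-weight $[q^2-1,3,q(q-1)-1]$ cyclic code over $\bbbf_q$ with the weight distribution given in Table I, which is exactly the assertion of Theorem \ref{teocuatro}. I do not anticipate any genuine obstacle here: once Theorem \ref{teouno} is available, the statement is an immediate corollary, and the only content is the pair of one-line gcd computations above. The single point worth emphasizing is that the hypothesis ``$q$ even'' is exactly the condition under which the special pair $(e_1,e_2)=(1,q-1)$ falls inside the characterized class of Theorem \ref{teouno}.
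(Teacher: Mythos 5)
Your proposal is correct and matches the paper's own treatment: the paper does not reprove Theorem~\ref{teocuatro} (it cites \cite{tresbis}) but immediately after its statement records precisely your observation, namely that $\gcd(q-1,2(1)-(q-1))=\gcd(q-1,2)=1$ and $\gcd(q+1,q-1)=1$ when $q$ is even, so that ${\cal C}_{(q+1,q-1)}$ falls inside the class characterized by Theorem~\ref{teouno}. Since Theorem~\ref{teouno} is proved independently of Theorem~\ref{teocuatro}, your derivation is a legitimate and complete proof.
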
 

Now, since $q$ is an even integer, clearly $\gcd(q-1,2(1)-(q-1))=\gcd(q-1,2)=1$ and $\gcd(q+1,q-1)=1$. Therefore, it is interesting to observe that the family of codes given by the previous theorem are completely contained in the class of cyclic codes studied by Theorem \ref{teouno}. Therefore our main result not only extends the family of codes in Theorem 2, but also it extends the previous result to cyclic codes over {\em any} finite field. In fact, this is not all that can be said because, as will be outlined below, all cyclic codes, over $\bbbf_{q}$, of length $q^2-1$, whose weight distributions are given in Table I, satisfy the two easy-to-check conditions in Theorem \ref{teouno}. In other words, Theorem \ref{teouno} is, indeed, a characterization of a class of optimal three-weight cyclic codes of dimension 3 over any finite field. As we already pointed out, it is a hard problem to find this kind of characterizations. However, for this particular case the fundamental tool that allowed us to find our characterization was, as will be shown later, the characterization for all two-weight irreducible cyclic codes that was introduced by \cite{seis}.  

This work is organized as follows: In Section 2 we establish our notation and recall the definition for the Gaussian sums. Section 3 is devoted to recalling the Griesmer lower bound, and also to presenting a result that will allow us to conclude that the class of codes in Theorem \ref{teouno} are optimal in the sense that their lengths reach such lower bound. In Section 4 we will study a kind of exponential sums that is important in order to determine the weights, and their corresponding frequencies, of the codes in Theorem \ref{teouno}. In fact, for this kind of exponential sums, we are going to find simple necessary and sufficient numerical conditions in order that the evaluation of any exponential sum of such kind is exactly equal to one. In Section 5 we use the definitions and results of the previous sections in order to present a formal proof of Theorem \ref{teouno}. After this, we will analyze the two easy-to-check conditions of Theorem 1 in order to give an explicit formula for the number of cyclic codes that satisfy such conditions. In addition we include, at the end of this section, some examples of Theorem \ref{teouno} as well as some examples of such explicit formula. In Section 6 we will prove that the two easy-to-check sufficient numerical conditions in Theorem 1 are also the necessary conditions. Finally Section 7 will be devoted to present ours conclusions.

\section{Notation and some definitions}

First of all we set for this section and for the rest of this work, the following:

\medskip

\noindent
{\bf Notation.} By using $q$ we will denote the power of a prime number, whereas by using $\gamma$ we will denote a fixed primitive element of $\bbbf_{q^2}$. We are going to fix $\delta:=\gamma^{q+1}$, and consequently note that $\delta$ is a fixed primitive element of $\bbbf_{q}$. For any integer $a$, the polynomial $h_a(x) \in \bbbf_{q}[x]$ will denote the minimal polynomial of $\gamma^{-a}$. In addition, we will denote by ``$\mbox{Tr}_{\bbbf_{q^2}/\bbbf_q}$" the trace mapping from $\bbbf_{q^2}$ to $\bbbf_q$.

An important tool for this work are the so-called Gaussian sums. Thus, in order to recall such tool, let $\psi$ be a multiplicative and $\chi$ an additive character of a finite field $F$. Then the {\em Gaussian sum}, $G_{F}(\psi,\chi)$, of the characters $\psi$ and $\chi$ over $F$ is defined by

$$G_{F}(\psi,\chi) := \sum_{c \in F^*} \psi(c)\chi(c) \; .$$

There are several other results related to Gaussian sums that will be important for this work. Fortunately, these results are perfectly well explained in Chapter 5 of \cite{cuatro}.

\section{The Griesmer lower bound}

Let $n_q(k,d)$ be the minimum length $n$ for which an $[n,k,d]$ linear code, over $\bbbf_{q}$, exists. Given the values of $q$, $k$ and $d$, a central problem of coding theory is to determine the actual value of $n_q(k,d)$. A well-known lower bound (see \cite{uno} and \cite{siete}) for $n_q(k,d)$ is

\begin{theorem}\label{teodos}
(Griesmer bound) With the previous notation,

\[n_q(k,d) \geq \sum_{i=0}^{k-1} \left \lceil \frac{d}{q^i} \right \rceil \; .\]
\end{theorem}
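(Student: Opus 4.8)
The plan is to prove the bound by induction on the dimension $k$, using the standard technique of residual codes. The base case $k=1$ is immediate: a one-dimensional code over $\bbbf_q$ is generated by a single word, and every nonzero scalar multiple of that word has the same support; hence its minimum distance $d$ is at most the length $n$, which is exactly the claimed inequality $n \geq \lceil d/q^0 \rceil = d$.

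For the inductive step I would let $C$ be an $[n,k,d]$ code with $k \geq 2$, fix a codeword $c$ of minimum weight $d$, and let $S \subseteq \{1,\dots,n\}$ be its support, so that $|S| = d$. I define the residual code $\mathrm{Res}(C,c)$ as the image of $C$ under the coordinate projection that deletes the positions in $S$; this is a linear code of length $n-d$. The heart of the argument is to show that $\mathrm{Res}(C,c)$ has dimension exactly $k-1$ and minimum distance $d' \geq \lceil d/q \rceil$, so that the inductive hypothesis can be applied to it.

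The distance bound is the step I expect to be the main obstacle, and I would establish it by an averaging argument over the scalar multiples of $c$. Given any $c' \in C$ whose projection $\bar{c}'$ is nonzero, note that $c'$ cannot be a scalar multiple of $c$, so $c' - \lambda c \neq 0$ for every $\lambda \in \bbbf_q$, whence each of these $q$ words has weight at least $d$. Outside $S$ the words $c' - \lambda c$ all coincide with $c'$, each contributing $\mathrm{wt}(\bar{c}')$; inside $S$, for each fixed coordinate $j \in S$ the entry $c'_j - \lambda c_j$ vanishes for exactly one value of $\lambda$ (since $c_j \neq 0$), so summing the in-$S$ weights over all $q$ scalars gives exactly $d(q-1)$. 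Combining these, $\sum_{\lambda \in \bbbf_q} \mathrm{wt}(c' - \lambda c) = d(q-1) + q\,\mathrm{wt}(\bar{c}') \geq qd$, which forces $\mathrm{wt}(\bar{c}') \geq d/q$ and hence, since weights are integers, $\mathrm{wt}(\bar{c}') \geq \lceil d/q \rceil$. The same observation that $\bar{c}' = 0$ exactly when $c' \in \langle c \rangle$ identifies the kernel of the projection with $\langle c \rangle$, so $\mathrm{Res}(C,c)$ has dimension $k-1$.

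Finally I would close the induction. Applying the inductive hypothesis to the $[n-d,\,k-1,\,d']$ code $\mathrm{Res}(C,c)$ yields $n - d \geq \sum_{i=0}^{k-2} \lceil d'/q^i \rceil \geq \sum_{i=0}^{k-2} \lceil \lceil d/q \rceil / q^i \rceil$, where the second inequality uses $d' \geq \lceil d/q \rceil$. Invoking the nested-ceiling identity $\lceil \lceil d/q \rceil / q^i \rceil = \lceil d/q^{i+1} \rceil$, valid for positive integers $q$ and any real $d$, the right-hand sum becomes $\sum_{j=1}^{k-1} \lceil d/q^j \rceil$. Adding the term $d = \lceil d/q^0 \rceil$ to both sides gives $n \geq \sum_{j=0}^{k-1} \lceil d/q^j \rceil$, which is precisely the Griesmer bound and completes the proof.
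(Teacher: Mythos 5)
Your proof is correct and complete: the base case, the averaging argument over the $q$ translates $c'-\lambda c$ showing both that the residual code has minimum distance at least $\lceil d/q\rceil$ and that the projection kernel is exactly $\langle c\rangle$, and the nested-ceiling identity $\lceil\lceil d/q\rceil/q^i\rceil=\lceil d/q^{i+1}\rceil$ are all sound. Note, however, that the paper offers no proof of this theorem at all; it is quoted as a classical result with citations to Griesmer and to Solomon--Stiffler, so there is no in-paper argument to compare against. What you have written is the standard residual-code proof found in the textbook literature, and it would serve as a self-contained justification of the bound that the paper takes for granted.
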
 

With the aid of the previous lower bound, we now present the following result.

\begin{lemma}\label{lemauno}
Suppose that ${\cal C}$ is a $[q^2-1,3,q(q-1)-1]$ linear code over $\bbbf_{q}$. Then ${\cal C}$ is an optimal linear code in the sense that its length reaches the lower bound in previous theorem.
\end{lemma}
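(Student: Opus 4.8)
The plan is to invoke the Griesmer bound of Theorem \ref{teodos} directly, with the specific parameters $k=3$ and $d=q(q-1)-1=q^2-q-1$, and to show that the resulting lower bound for $n_q(3,q^2-q-1)$ is exactly $q^2-1$. Since by hypothesis $\mathcal{C}$ is a code of length precisely $q^2-1$ with these parameters, and since the Griesmer bound is a lower bound on the minimum possible length, this equality will immediately force $\mathcal{C}$ to be optimal in the stated sense. So the whole argument reduces to evaluating the three summands $\left\lceil d/q^{\,i}\right\rceil$ for $i=0,1,2$.

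First I would treat the term $i=0$, which is simply $\left\lceil d\right\rceil=d=q^2-q-1$ because $d$ is an integer. For the term $i=1$ I would rewrite
\[
\frac{d}{q}=\frac{q^2-q-1}{q}=q-1-\frac{1}{q}.
\]
Since every admissible $q$ is a prime power, we have $q\geq 2$ and hence $0<1/q<1$, so this quantity lies strictly between $q-2$ and $q-1$, giving $\left\lceil d/q\right\rceil=q-1$. For the term $i=2$ I would simply note that $0<d<q^2$ for every $q\geq 2$ (the lower inequality holding because $q^2-q-1\geq 1$ when $q\geq 2$, the upper one being clear), so that $0<d/q^2<1$ and therefore $\left\lceil d/q^2\right\rceil=1$.

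Adding the three contributions yields
\[
\sum_{i=0}^{2}\left\lceil \frac{d}{q^{\,i}}\right\rceil=(q^2-q-1)+(q-1)+1=q^2-1,
\]
so Theorem \ref{teodos} gives $n_q(3,q^2-q-1)\geq q^2-1$. Because $\mathcal{C}$ has length exactly $q^2-1$, its length meets this lower bound, and $\mathcal{C}$ is therefore optimal. I do not expect any genuine obstacle in this argument: the entire content is a careful evaluation of the three ceiling functions, and the only point deserving a moment's care is verifying that the strict inequalities $0<1/q<1$ and $0<d<q^2$ hold for \emph{all} admissible $q$ (that is, all prime powers $q\geq 2$), which is what guarantees that each ceiling evaluates as claimed.
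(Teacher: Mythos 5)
Your proposal is correct and follows essentially the same route as the paper: a direct evaluation of the three Griesmer summands $\left\lceil d/q^{\,i}\right\rceil$ for $i=0,1,2$, showing that they sum to $q^2-1$. The paper states the three values without spelling out the ceiling computations, which you justify carefully, but the argument is identical in substance.
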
 

\begin{proof} 
By means of a direct application of the Griesmer lower bound, we have

\begin{eqnarray}
&&\left \lceil \frac{q(q-1)-1}{q^0} \right \rceil + \left \lceil \frac{q(q-1)-1}{q} \right \rceil + \left \lceil \frac{q(q-1)-1}{q^2} \right \rceil \nonumber \\
&=& [(q-1)q-1]+[q-1]+1=q^2-1 \; . \nonumber
\end{eqnarray}
\end{proof}

\section{A class of exponential sums}

It is well known that the weight distribution of some cyclic codes can be obtained by means of the evaluation of some exponential sums. This is particularly true for the class of cyclic codes that we are interested in. The following result goes along these lines.

\begin{lemma}\label{lemados}
Let $\chi'$ and $\chi$ be respectively the canonical additive characters of $\bbbf_{q^2}$ and $\bbbf_{q}$. For any integers $e_1$ and $e_2$, and for all $a,b \in \bbbf_{q^2}$, consider the sums

\[S_{(e_1,e_2)}(a,b) := \sum_{x \in \bbbf_{q^2}^*} \chi'(a x^{(q+1)e_1}+b x^{e_2}) \; .\]

\noindent
If $a^q+a \neq 0$, $b \neq 0$ and $\gcd(q+1,e_2)=1$, then

\[S_{(e_1,e_2)}(a,b)=-{\displaystyle \sum_{z \in \bbbf_{q}^*} \sum_{x \in \bbbf_{q}^*}\chi(z+(a^q+a)x^{e_1}+z^{-1}b^{q+1}x^{e_2})} \; .\]
\end{lemma}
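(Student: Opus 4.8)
The plan is to reduce the sum over $\bbbf_{q^2}^*$ to a double sum over $\bbbf_q^*$ by exploiting the norm map $N(x)=x^{q+1}$, which maps $\bbbf_{q^2}^*$ onto $\bbbf_q^*$ in a $(q+1)$-to-one fashion, together with the transitivity of the trace. First I would use $\chi'(y)=\chi(\mbox{Tr}_{\bbbf_{q^2}/\bbbf_q}(y))$. Since $x^{(q+1)e_1}=N(x)^{e_1}\in\bbbf_q$, its trace against $a$ is $(a^q+a)x^{(q+1)e_1}$ by $\bbbf_q$-linearity, so the first summand contributes the scalar factor $\chi((a^q+a)x^{(q+1)e_1})$ while the second keeps the form $\chi'(bx^{e_2})$. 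Grouping the summation over $x$ according to the value $w=N(x)\in\bbbf_q^*$ then gives
\[
S_{(e_1,e_2)}(a,b)=\sum_{w\in\bbbf_q^*}\chi((a^q+a)w^{e_1})\sum_{x\in\bbbf_{q^2}^*,\,N(x)=w}\chi'(bx^{e_2}).
\]

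Next I would simplify the inner sum using $\gcd(q+1,e_2)=1$. The fiber $\{x:N(x)=w\}$ is a coset $x_0U$ of the norm-one subgroup $U=\{u\in\bbbf_{q^2}^*:u^{q+1}=1\}$, which is cyclic of order $q+1$; since $\gcd(e_2,q+1)=1$, the map $u\mapsto u^{e_2}$ permutes $U$, and therefore $x\mapsto bx^{e_2}$ carries this fiber bijectively onto the fiber $\{y:N(y)=d\}$ with $d=N(b)w^{e_2}=b^{q+1}w^{e_2}$ (here $b\neq0$ is exactly what guarantees $d\neq0$). Hence the inner sum is $\sum_{N(y)=d}\chi'(y)$, and the whole problem collapses to the single clean identity
\[
\sum_{y\in\bbbf_{q^2}^*,\,N(y)=d}\chi'(y)=-\sum_{z\in\bbbf_q^*}\chi(z+dz^{-1}),\qquad d\in\bbbf_q^*.
\]
Substituting this with $d=b^{q+1}w^{e_2}$ and renaming $w$ as $x$ produces exactly the claimed right-hand side.

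The main obstacle is establishing this last identity, and this is where I would spend the real effort. The transparent route is to observe that a norm-$d$ element satisfies $y^q=d/y$, so $\chi'(y)=\chi(\mbox{Tr}_{\bbbf_{q^2}/\bbbf_q}(y))=\chi(y+d/y)$, and then to count, for each $s\in\bbbf_q$, the number $N_s$ of norm-$d$ elements of trace $s$: these are the roots of $T^2-sT+d$, and a discriminant analysis gives $N_s=1-\eta(s^2-4d)$, where $\eta$ is the quadratic character of $\bbbf_q$ (with $\eta(0)=0$). Since $\sum_{s\in\bbbf_q}\chi(s)=0$, the left-hand side equals $-\sum_{s\in\bbbf_q}\eta(s^2-4d)\chi(s)$; counting the solutions of $z+dz^{-1}=s$ in the same way shows $\sum_{z\in\bbbf_q^*}\chi(z+dz^{-1})=\sum_{s\in\bbbf_q}(1+\eta(s^2-4d))\chi(s)=\sum_{s\in\bbbf_q}\eta(s^2-4d)\chi(s)$, and the two sides match up to the sign. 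For a characteristic-free argument, and in particular to cover even $q$ where the discriminant count must be replaced by an Artin--Schreier count, I would instead expand both sides into Gaussian sums: writing the indicator of $\{N(y)=d\}$ via the multiplicative characters $\lambda$ of $\bbbf_q^*$ lifted through the norm, each side reduces to $\frac{1}{q-1}\sum_{\lambda}\lambda^{-1}(d)\,G_{\bbbf_q}(\lambda,\chi)^2$ up to an overall sign, with the Hasse--Davenport lifting relation $G_{\bbbf_{q^2}}(\lambda\circ N,\chi')=-G_{\bbbf_q}(\lambda,\chi)^2$ supplying precisely the minus sign.

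Finally I would record where the hypotheses enter: $\gcd(q+1,e_2)=1$ is used exactly to permute $U$ and obtain the fiber bijection, and $b\neq0$ is what makes the reduction to a norm-$d$ fiber legitimate. It is worth noting that only these two conditions are actually needed to derive the displayed formula; the hypothesis $a^q+a\neq0$ is carried as a standing assumption for the surrounding development and does not intervene in the identity itself, since $\chi'(ax^{(q+1)e_1})=\chi((a^q+a)x^{(q+1)e_1})$ holds for every $a$.
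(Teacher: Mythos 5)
Your proof is correct, and its skeleton is the same as the paper's: partition $\bbbf_{q^2}^*$ into the fibers of the norm map (equivalently, the cosets $\gamma^i\langle\gamma^{q-1}\rangle$), pull the constant $\chi((a^q+a)w^{e_1})$ out of each fiber, use $\gcd(q+1,e_2)=1$ to send a norm fiber bijectively to a norm fiber, and then convert the sum of $\chi'$ over a fiber into a Kloosterman-type sum over $\bbbf_q^*$, with the minus sign coming from Davenport--Hasse. Where you differ is in packaging: you isolate the single clean identity $\sum_{N(y)=d}\chi'(y)=-\sum_{z\in\bbbf_q^*}\chi(z+dz^{-1})$ for $d\in\bbbf_q^*$, whereas the paper never states it separately and instead threads the same content through Theorem 5.30 of Lidl--Niederreiter (expansion of $\sum_w\chi'(cw^{q-1})$ in Gaussian sums of lifted characters), Davenport--Hasse, and a Fourier expansion of $\chi$ restricted to $\bbbf_q^*$ to reassemble the $z$-sum. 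Your character-sum proof of the identity is literally the paper's argument in disguise; your alternative proof via the root count $N_s=1-\eta(s^2-4d)$ for $T^2-sT+d$ is genuinely more elementary and self-contained, but as you note it only covers odd $q$, so the characteristic-free statement still leans on the Gaussian-sum machinery. Your closing observation that the hypothesis $a^q+a\neq0$ is never used (since $\chi'(ax^{(q+1)e_1})=\chi((a^q+a)N(x)^{e_1})$ for every $a$) is accurate; the paper carries it only because the degenerate cases are split off later in Corollary~\ref{coruno}.
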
 

\begin{proof}
Recalling that $\delta:=\gamma^{q+1}$ we have

\begin{eqnarray}
S_{(e_1,e_2)}(a,b)&=&\sum_{i=0}^{q-2}\chi'(a\delta^{ie_1})\sum_{w \in \gamma^i \langle \gamma^{q-1} \rangle}\chi'(bw^{e_2}) \nonumber \\
&=&\sum_{i=0}^{q-2}\chi(\mbox{Tr}_{\bbbf_{q^2}/\bbbf_{q}}(a)\delta^{ie_1})\sum_{w \in \gamma^i \langle \gamma^{q-1} \rangle}\chi'(bw^{e_2}) \; , \nonumber
\end{eqnarray}

\noindent
and, since $\gcd(q+1,e_2)=1$, we have 

\[\sum_{w \in \gamma^i \langle \gamma^{q-1} \rangle}\chi'(bw^{e_2})=\sum_{w \in \gamma^{i e_2} \langle \gamma^{q-1} \rangle}\chi'(bw)=\frac{1}{q-1}\sum_{w \in \bbbf_{q^2}^*}\chi'(b \gamma^{i e_2} w^{q-1}) \; .\]

Let $\widehat{\bbbf}_{q^2}$ and $\widehat{\bbbf}_{q}$ be respectively the multiplicative character groups of $\bbbf_{q^2}$ and $\bbbf_{q}$. Now, if $\mbox{N}$ is the norm mapping from $\bbbf_{q^2}$ to $\bbbf_{q}$ and $H$ is the subgroup of order $q-1$ of $\widehat{\bbbf}_{q^2}$, then note that $H=\{\psi \circ \mbox{N} \: | \: \psi \in \widehat{\bbbf}_{q} \}$ (that is, $H$ is nothing but the ``lift" of $\widehat{\bbbf}_{q}$ to $\bbbf_{q^2}$). Therefore, owing to Theorem 5.30 (p. 217) in \cite{cuatro}, we have

\begin{eqnarray}
\sum_{w \in \bbbf_{q^2}^*}\chi'(b \gamma^{i e_2} w^{q-1})&=&\sum_{\psi \in \widehat{\bbbf}_{q}} G_{\bbbf_{q^2}}(\bar{\psi} \circ \mbox{N},\chi') \psi(\mbox{N}(b \gamma^{i e_2})) \nonumber \\
&=& -\sum_{\psi \in \widehat{\bbbf}_{q}} G_{\bbbf_{q}}(\bar{\psi},\chi)^2 \psi(\mbox{N}(b \gamma^{i e_2})) \; , \nonumber
\end{eqnarray}

\noindent
where the last equality arises due to the Davenport-Hasse theorem (Theorem 5.14 (p. 197) in \cite{cuatro}). In consequence, since $\gamma^{i(q+1)}=\mbox{N}(\gamma^i)=\mbox{N}(\gamma)^i=\delta^{i}$ and $ \langle \delta \rangle=\bbbf_{q}^*$, we have 

\begin{equation}\label{equno}
S_{(e_1,e_2)}(a,b)=-\frac{1}{q-1}\sum_{x \in \bbbf_{q}^*}\chi((a^q+a)x^{e_1}) \sum_{\psi \in \widehat{\bbbf}_{q}} G_{\bbbf_{q}}(\bar{\psi},\chi)^2 \psi(b^{q+1}x^{e_2}) \; .
\end{equation}

On the other hand, by using the Fourier expansion of the restriction of $\chi$ to $\bbbf_{q}^*$ in terms of the multiplicative characters of $\bbbf_{q}$, we have that for all $x,z \in \bbbf_{q}^*$:

\[\chi(z^{-1}b^{q+1}x^{e_2})=\frac{1}{q-1}\sum_{\psi \in \widehat{\bbbf}_{q}} G_{\bbbf_{q}}(\bar{\psi},\chi) \bar{\psi}(z) \psi(b^{q+1}x^{e_2}) \; ,\]

\noindent
and by multiplying both sides of the preceding equation by $\chi(z)$ and by summing we obtain

\[\sum_{z \in \bbbf_{q}^*}\chi(z+z^{-1}b^{q+1}x^{e_2})=\frac{1}{q-1}\sum_{\psi \in \widehat{\bbbf}_{q}} G_{\bbbf_{q}}(\bar{\psi},\chi)^2 \psi(b^{q+1}x^{e_2}) \; .\]

\noindent
Finally, by substituting the previous equation in (\ref{equno}) we obtain the desired result.
\end{proof}

\begin{remark}\label{rmuno}
It is worth pointing out that an important part of the previous proof was inspired by the proof of Theorem 2.8 in \cite{cinco}. 
\end{remark}

Now we are going to analyze a kind of exponential sums that are constructed by means of those exponential sums studied in the previous lemma. In fact, in what follows we are going to find simple necessary and sufficient numerical conditions in order that the evaluation of any exponential sum of such kind is exactly equal to one. 

\begin{lemma}\label{lematres}
With the same notation and hypothesis as in the previous lemma, consider now the sums of the form:

\[T_{(e_1,e_2)}(a,b) := \sum_{y \in \bbbf_{q}^*} S_{(e_1,e_2)}(ya,yb) \; .\]

\noindent
Then $\gcd(q-1,2e_1-e_2)=1$ if and only if $T_{(e_1,e_2)}(a,b)=1$.
\end{lemma}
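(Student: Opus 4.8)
The plan is to start from the expression for $S_{(e_1,e_2)}(a,b)$ obtained in the previous lemma, substitute $a \mapsto ya$ and $b \mapsto yb$, and sum over $y \in \bbbf_{q}^*$. First I would observe that replacing $a$ by $ya$ turns the factor $a^q+a$ into $y(a^q+a)$ (since $y \in \bbbf_{q}^*$ is fixed by the Frobenius map $c \mapsto c^q$), and replacing $b$ by $yb$ turns $b^{q+1}$ into $y^{q+1}b^{q+1} = y^2 b^{q+1}$ (again because $y^q = y$). Hence

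\[
T_{(e_1,e_2)}(a,b) = -\sum_{y \in \bbbf_{q}^*} \sum_{z \in \bbbf_{q}^*} \sum_{x \in \bbbf_{q}^*} \chi\big(z + y(a^q+a)x^{e_1} + z^{-1}y^2 b^{q+1}x^{e_2}\big) \; .
\]

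The next step is to absorb the powers of $y$ by a change of variables in $x$ and $z$. Since $\delta$ is a primitive element of $\bbbf_{q}$, I would write $y = \delta^{k}$ and $x = \delta^{j}$, and look for a substitution that eliminates the explicit dependence on $y$. Concretely, I would try the substitution $x \mapsto y^{t} x$ for a suitable integer $t$, chosen so that the two $y$-exponents in the $x^{e_1}$ and $x^{e_2}$ terms become equal; this forces $1 + t e_1 \equiv 2 + t e_2 \pmod{q-1}$ on the multiplicative side, i.e. $t(e_1 - e_2) \equiv 1 \pmod{q-1}$, and simultaneously a rescaling $z \mapsto y^{s} z$ to balance the first and third summands. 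Matching all three exponents of $y$ should reduce to a single congruence governed by $\gcd(q-1, 2e_1 - e_2)$: the key algebraic identity is that the exponent of $y$ common to all terms can be made to vanish modulo $q-1$ precisely when the map $y \mapsto y^{2e_1 - e_2}$ is a bijection of $\bbbf_{q}^*$, which happens exactly when $\gcd(q-1, 2e_1-e_2)=1$.

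\textbf{The main obstacle}, and the crux of the argument, will be handling the two directions of the equivalence through this substitution. When $\gcd(q-1,2e_1-e_2)=1$, the inner double sum over $(z,x)$ becomes independent of $y$ after the change of variables, so summing over the $q-1$ values of $y$ and using the orthogonality relation $\sum_{y \in \bbbf_q^*}\chi(y c) = -1$ for $c \neq 0$ (and $=q-1$ for $c=0$) should collapse the whole expression to $1$. When $\gcd(q-1,2e_1-e_2) = d > 1$, the substitution only acts transitively on the index-$d$ subgroup, so the $y$-sum splits into $d$ cosets and the orthogonality cancellation fails, yielding a value different from $1$; I would verify this by isolating the coefficient of $y$ after collecting terms and showing the resulting character sum over $y$ does not reduce to a single $-1$ contribution. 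I expect the delicate bookkeeping to be in choosing the exponents $s,t$ of the substitution consistently so that all three occurrences of $y$ are simultaneously neutralized, and in confirming that the residual sum over $y$ is a genuine character sum of the form $\sum_{y}\chi(\lambda y)$ whose value is dictated solely by whether $2e_1 - e_2$ is a unit modulo $q-1$.
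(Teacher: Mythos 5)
Your opening step (replacing $a\mapsto ya$, $b\mapsto yb$ to get the summand $\chi\bigl(z+(a^q+a)x^{e_1}y+z^{-1}b^{q+1}x^{e_2}y^2\bigr)$) agrees with the paper, but the central mechanism you propose afterwards does not exist. Write the three $y$-exponents produced by the substitutions $x\mapsto y^{t}x$, $z\mapsto y^{s}z$: they are $s$, $te_1+1$ and $te_2+2-s$. Forcing all three to vanish modulo $q-1$ requires $s\equiv 0$, $te_1\equiv -1$ and $te_2\equiv -2$, which together imply $t(2e_1-e_2)\equiv 0\pmod{q-1}$; under the hypothesis $\gcd(q-1,2e_1-e_2)=1$ this forces $t\equiv 0$, contradicting $te_1\equiv -1$. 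So the inner sum can \emph{never} be made independent of $y$ in the case you need, and your ``key algebraic identity'' is false. (Your auxiliary congruence $t(e_1-e_2)\equiv 1$ also presupposes that $e_1-e_2$ is invertible modulo $q-1$, which is not given.) The only consistent choice is to make the three exponents \emph{equal} (to $1$), i.e. $t=0$, $s=1$, which factors out a single $y$ and, after $\sum_{y\in\bbbf_q^*}\chi(yc)=-1$ for $c\neq 0$, reduces the lemma to showing that $z^2+(a^q+a)x^{e_1}z+b^{q+1}x^{e_2}=0$ has exactly $q-2$ solutions $(x,z)\in(\bbbf_q^*)^2$ precisely when $\gcd(q-1,2e_1-e_2)=1$. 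That counting problem is the actual content of the lemma, and nothing in your sketch addresses it.

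For comparison, the paper attacks the $y$-sum (resp.\ the $z$-sum) directly as a quadratic exponential sum: for $q$ even it invokes the evaluation $\sum_{y}\chi(\rho_0+\rho_1y+\rho_2y^2)$ from Theorem 5.34 of Lidl--Niederreiter, which collapses $T_{(e_1,e_2)}(a,b)$ to $1-q-q\sum_{x}\chi\bigl((a^q+a)^{-2}b^{q+1}x^{e_2-2e_1}\bigr)$, and the gcd condition enters because $x\mapsto x^{e_2-2e_1}$ permutes $\bbbf_q^*$; for $q$ odd it uses the quadratic Gauss sum formula (Theorem 5.33), a substitution $x\mapsto x^{r}$ with $r$ an \emph{odd} inverse of $2e_1-e_2$ modulo $q-1$, and a separate direct computation of $T_{(1,1)}(a,b)$. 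The converse direction likewise requires exhibiting $T_{(e_1,e_2)}(a,b)=1-q-qdt$ and arguing $dt\neq -1$. Your proposal contains none of this machinery, no even/odd case distinction, and no treatment of the quadratic character; as written it has a genuine gap at its core.
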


\begin{proof}
Suppose that $\gcd(q-1,2e_1-e_2)=1$, then, from previous lemma and since $y^{q+1}=y^2$ for all $y \in \bbbf_{q}^*$, we have 

\begin{equation}\label{eqdos}
T_{(e_1,e_2)}(a,b)=-{\displaystyle \sum_{y \in \bbbf_{q}^*} \sum_{z \in \bbbf_{q}^*} \sum_{x \in \bbbf_{q}^*}\chi(z+(a^q+a)x^{e_1}y+z^{-1}b^{q+1}x^{e_2}y^2)} \; .
\end{equation}

First suppose that $q$ is even. Then by Theorem 5.34 (p. 218) in \cite{cuatro} we know that, for all $\rho_0, \rho_1, \rho_2 \in \bbbf_{q}$, 

\[\sum_{y \in \bbbf_{q}}\chi(\rho_0+\rho_1y+\rho_2y^2)=
\left\{ \begin{array}{cl}
		\chi(\rho_0)q & \mbox{ if $\rho_2+\rho_1^2=0$,} \\
		0 & \mbox{ otherwise.}
	\end{array}
\right . \]

\noindent
Therefore

\begin{eqnarray}\label{eqtres}
T_{(e_1,e_2)}(a,b)&=&1-q-\sum_{x \in \bbbf_{q}^*} \sum_{z \in \bbbf_{q}^*} \sum_{y \in \bbbf_{q}}\chi(z+(a^q+a)x^{e_1}y+z^{-1}b^{q+1}x^{e_2}y^2)  \nonumber \\
&=&1-q-q\sum_{x \in \bbbf_{q}^*} \chi((a^q+a)^{-2}b^{q+1}x^{e_2-2e_1}) \; , 
\end{eqnarray}

\noindent
and because $\gcd(q-1,2e_1-e_2)=1$, we conclude that $T_{(e_1,e_2)}(a,b)=1$. 

For the case when $q$ is odd, we first prove that $T_{(1,1)}(a,b)=1$. Thus, by making the variable substitution $x \mapsto b^{-(q+1)}y^{-2}x$, in the inner summation of (\ref{eqdos}) (recall $e_1=e_2=1$), we get

\[T_{(1,1)}(a,b)=-\sum_{z \in \bbbf_{q}^*}\chi(z) \sum_{x \in \bbbf_{q}^*}\chi(z^{-1}x) \sum_{y \in \bbbf_{q}^*} \chi(b^{-(q+1)}(a^q+a)x y^{-1})=1 \; .\]

Now suppose that $q$ is odd and that $e_1$ and $e_2$ are any two integers. Then by Theorem 5.33 in \cite{cuatro} we know that, for all $\rho_0, \rho_1, \rho_2 \in \bbbf_{q}$ with $\rho_2 \neq 0$, 

\[\sum_{y \in \bbbf_{q}^*}\chi(\rho_0+\rho_1y+\rho_2y^2)=\chi(\rho_0-\rho_1^2(4\rho_2)^{-1})\eta(\rho_2)G_{\bbbf_{q}}(\eta,\chi)-\chi(\rho_0) \; ,\]

\noindent
where $\eta$ is the quadratic character of $\bbbf_{q}$. Therefore, from (\ref{eqdos}), we have  

\begin{equation}\label{eqcuatro}
T_{(e_1,e_2)}(a,b)=-\!\!\sum_{x \in \bbbf_{q}^*} \sum_{z \in \bbbf_{q}^*}\!\chi(z-cx^{2e_1-e_2}z)\eta(z^{-1}N_bx^{e_2})G_{\bbbf_{q}}(\eta,\chi)-\chi(z) \: ,
\end{equation}

\noindent
where $T_a:=a^q+a$, $N_b:=b^{q+1}$ and $c:=4^{-1}T_a^2 N_b^{-1}$. But $q$ is odd and $\gcd(q+1,e_2)=1$, therefore both $e_2$ and $2e_1-e_2$ must be odd integers. Consequently, since $\gcd(q-1,2e_1-e_2)=1$, there must exist an odd integer $r$ such that $(2e_1-e_2)r \equiv 1 \pmod{q-1}$. Therefore, by applying the variable substitution $x \mapsto x^r$ in the previous equality, we now have 

\[T_{(e_1,e_2)}(a,b)=-\!\!\sum_{x \in \bbbf_{q}^*} \sum_{z \in \bbbf_{q}^*}\!\chi(z-cxz)\eta(z^{-1}N_bx^{re_2})G_{\bbbf_{q}}(\eta,\chi)-\chi(z) \; ,\]

\noindent
and since $e_2$ and $r$ are both odd integers, clearly $\eta(z^{-1}N_b x^{re_2})=\eta(z^{-1}N_b x)$. Therefore

\begin{eqnarray}
T_{(e_1,e_2)}(a,b)\!\!&=&-\!\!\sum_{x \in \bbbf_{q}^*} \sum_{z \in \bbbf_{q}^*}\!\chi(z-cxz)\eta(z^{-1}N_bx)G_{\bbbf_{q}}(\eta,\chi)-\chi(z) \nonumber \\
\!\!&=& \; T_{(1,1)}(a,b)=1 \; . \nonumber
\end{eqnarray}

For the proof of the converse, suppose that $\gcd(q-1,2e_1-e_2)=d>1$. Thus, again by first supposing that $q$ is even, we have from (\ref{eqtres}) that

\begin{eqnarray}
T_{(e_1,e_2)}(a,b)&=&1-q-qd\sum_{x \in \langle \delta^{d} \rangle} \chi((a^q+a)^{-2}b^{q+1}x)  \nonumber \\
&=&1-q-qdt \; , \nonumber
\end{eqnarray}

\noindent
for some integer $t$ (recall that $\chi(w) =\pm 1$, for all $w \in \bbbf_{q}$), and since $d>1$, we have that $dt \neq -1$. Therefore $T_{(e_1,e_2)}(a,b) \neq 1$.

Finally, suppose that $\gcd(q-1,2e_1-e_2)=d>1$ and that $q$ is odd. In this case note that $e_2$ and $d$ are also odd integers. Thus, since $\eta(x^{e_2})=\eta(x^{2e_1-e_2})$ for all $x \in \bbbf_{q}^*$, we have from (\ref{eqcuatro}):

\begin{eqnarray}
T_{(e_1,e_2)}(a,b)\!\!\!&=&-\!\!\!\sum_{x \in \bbbf_{q}^*} \sum_{z \in \bbbf_{q}^*}\!\chi(z-cx^{2e_1-e_2}z)\eta(z^{-1}N_bx^{2e_1-e_2})G_{\bbbf_{q}}(\eta,\chi)-\chi(z)  \nonumber \\
\!\!\!&=&\!\!\!1-q-G_{\bbbf_{q}}(\eta,\chi)\sum_{x \in \bbbf_{q}^*} \eta(N_b x) \sum_{z \in \bbbf_{q}^*}\chi((1-cx^d)z) \eta(z^{-1}) \; , \nonumber
\end{eqnarray}

\noindent
because  $\gcd(q-1,2e_1-e_2)=d$ and $\eta(x^d)=\eta(x)$. 

Now, if ${\cal B}:=\{ x \in \bbbf_{q}^* \: | \: x^d=c^{-1} \}$, then observe that $|{\cal B}|=0$ or $|{\cal B}|=d$, and

\[\sum_{x \in {\cal B}} \sum_{z \in \bbbf_{q}^*}\chi((1-cx^d)z) \eta(z^{-1})=\sum_{x \in {\cal B}} \sum_{z \in \bbbf_{q}^*}\eta(z^{-1})=0 \; .\]

\noindent
Therefore

\begin{eqnarray}
T_{(e_1,e_2)}(a,b)&=&1-q-G_{\bbbf_{q}}(\eta,\chi)\sum_{x \in \bbbf_{q}^* \setminus {\cal B}} \eta(N_b x) \sum_{z \in \bbbf_{q}^*}\bar{\chi}((cx^d-1)z) \bar{\eta}(z) \nonumber \\
&=&1-q-G_{\bbbf_{q}}(\eta,\chi)\sum_{x \in \bbbf_{q}^* \setminus {\cal B}} \eta(N_b x)  G_{\bbbf_{q}}(\bar{\eta},\bar{\chi}) \eta(cx^d-1) \; , \nonumber
\end{eqnarray}

\noindent
where the last equality arises due to the Part (i) of Theorem 5.12 (p. 193) in \cite{cuatro}. But $G_{\bbbf_{q}}(\eta,\chi)G_{\bbbf_{q}}(\bar{\eta},\bar{\chi})=q$, therefore

\begin{eqnarray}
T_{(e_1,e_2)}(a,b)&=&1-q-q\sum_{x \in \bbbf_{q}^* \setminus {\cal B}} \eta(N_b x) \eta(cx^d-1) \nonumber \\
&=&1-q-q\sum_{x \in \bbbf_{q}^* \setminus {\cal B}} \eta(x) \eta(x^d-c^{-1}) \; , \nonumber
\end{eqnarray}

\noindent
because $\eta(N_b)=\eta(c^{-1})$. Now, if ${\cal D} := \{ \delta^i \: | \: 0 \leq i < \frac{q-1}{d} \}$, then note that 

\[|{\cal D} \cap {\cal B}|=
\left\{ \begin{array}{cl}
		0 & \mbox{ if $|{\cal B}|=0$,} \\
		1 & \mbox{ if $|{\cal B}|=d$}.
			\end{array}
\right . \]

\noindent
Thus, 

\[T_{(e_1,e_2)}(a,b)=1-q-q\sum_{x \in {\cal D} \setminus ({\cal D} \cap {\cal B})} \sum_{j=0}^{d-1} \eta(x \delta^{j\frac{q-1}{d}}) \eta((x \delta^{j\frac{q-1}{d}})^d-c^{-1})\; ,\]

\noindent
but since $\frac{q-1}{d}$ is even, we have $\eta(x \delta^{j\frac{q-1}{d}})=\eta(x)$ and clearly $(x \delta^{j\frac{q-1}{d}})^d=x^d$. Therefore 

\begin{eqnarray}
T_{(e_1,e_2)}(a,b)&=&1-q-q\sum_{x \in {\cal D} \setminus ({\cal D} \cap {\cal B})} \sum_{j=0}^{d-1} \eta(x) \eta(x^d-c^{-1}) \nonumber \\
&=&1-q-qd\sum_{x \in {\cal D} \setminus ({\cal D} \cap {\cal B})} \eta(x) \eta(x^d-c^{-1}) \nonumber \\
&=&1-q-qdt \; , \nonumber
\end{eqnarray}

\noindent
for some integer $t$ (recall that $\eta(w) =\pm 1$, for all $w \in \bbbf_{q}^*$), and since $d>1$, we have that $dt \neq -1$. Therefore $T_{(e_1,e_2)}(a,b) \neq 1$.
\end{proof}

\begin{remark}\label{rmtres}
Let $\bbbf_{q}$ be any finite field of odd characteristic and let $\eta$ be the quadratic character of $\bbbf_{q}$. If $\rho$ is any nonzero element of $\bbbf_{q}$ then note that, as a consequence of the previous proof, it is possible to conclude that

\[|\{ x \in \bbbf_{q}^* \setminus \{\rho\} \; | \; \eta(x^2 - \rho x)=1 \}|=\frac{q-1}{2}-1 \; . \]
\end{remark}
  
We end this section of preliminary results by presenting the following

\begin{corollary}\label{coruno}
With the same notation as in the previous lemma, let $a,b \in \bbbf_{q^2}$. If $\gcd(q-1,2e_1-e_2)=1$ and $\gcd(q+1,e_2)=1$, then

\[T_{(e_1,e_2)}(a,b)=
\left\{ \begin{array}{cl}
		(q-1)(q^2-1) & \mbox{ if $\;\;\;a=0$ and $b=0$,} \\
		-(q^2-1)     & \mbox{ if $a^q+a \neq 0$ and $b=0$,} \\
		-(q-1)       & \mbox{ if $a^q+a=0$ and $b \neq 0$,} \\
		 1           & \mbox{ if $a^q+a \neq 0$ and $b \neq 0$}.
			\end{array}
\right . \]
\end{corollary}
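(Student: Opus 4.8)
The plan is to establish each of the four cases of Corollary~\ref{coruno} by reducing everything to the already-proven results, so that essentially no new computation is required. The main device is to split the double sum defining $T_{(e_1,e_2)}(a,b)$ according to whether the conditions $a^q+a\neq 0$ and $b\neq 0$ (which are the hypotheses of Lemma~\ref{lemados} and Lemma~\ref{lematres}) do or do not hold. Recall that $T_{(e_1,e_2)}(a,b)=\sum_{y\in\bbbf_q^*}S_{(e_1,e_2)}(ya,yb)$; since $(ya)^q+(ya)=y(a^q+a)$ and $yb$ for $y\in\bbbf_q^*$ preserve the vanishing/nonvanishing of $a^q+a$ and $b$, the four cases are genuinely disjoint regimes in which the inner sums $S_{(e_1,e_2)}(ya,yb)$ all behave the same way across $y$.

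First I would dispose of the two boundary cases directly from the definition of $S_{(e_1,e_2)}$. If $a=0$ and $b=0$, then $S_{(e_1,e_2)}(0,0)=\sum_{x\in\bbbf_{q^2}^*}\chi'(0)=q^2-1$, and summing over the $q-1$ values of $y$ gives $(q-1)(q^2-1)$. If $a^q+a\neq 0$ and $b=0$, then $S_{(e_1,e_2)}(ya,0)=\sum_{x\in\bbbf_{q^2}^*}\chi'(ya\,x^{(q+1)e_1})$; here $x^{(q+1)e_1}$ takes values in $\bbbf_q^*$, and using the character-sum identity $\sum_{x\in\bbbf_{q^2}^*}\chi'(\alpha x^{(q+1)e_1})=-1$ whenever the relevant argument is a nonzero element of $\bbbf_q$ (a standard orthogonality computation, valid because $\chi'$ restricted to $\bbbf_q$ is $\chi^{\mbox{\scriptsize Tr}}$ and $\mbox{Tr}_{\bbbf_{q^2}/\bbbf_q}(ya)=y(a^q+a)\neq 0$), one gets $S_{(e_1,e_2)}(ya,0)=-(q+1)$; summing over $y$ yields $-(q-1)(q+1)=-(q^2-1)$. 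For the third case, $a^q+a=0$ and $b\neq 0$, the roles swap: the term $a x^{(q+1)e_1}$ contributes trivially since its trace vanishes, and the inner sum over the coset structure used in the proof of Lemma~\ref{lemados} collapses, leaving $S_{(e_1,e_2)}(0,yb)$ which, upon summation over $y\in\bbbf_q^*$ together with the condition $\gcd(q+1,e_2)=1$, should evaluate to $-(q-1)$.

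The fourth case, $a^q+a\neq 0$ and $b\neq 0$, is exactly the content of Lemma~\ref{lematres}: under the standing hypothesis $\gcd(q-1,2e_1-e_2)=1$ and $\gcd(q+1,e_2)=1$, that lemma gives $T_{(e_1,e_2)}(a,b)=1$ immediately. So no work is needed here beyond citing the lemma.

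The step I expect to require the most care is the third case, $a^q+a=0$ with $b\neq 0$, because Lemma~\ref{lemados} explicitly assumes $a^q+a\neq 0$ and so cannot be invoked directly; I would instead go back to the coset decomposition at the start of its proof, set the trace term to $1$, and evaluate $\sum_{y}\sum_{i=0}^{q-2}\sum_{w\in\gamma^i\langle\gamma^{q-1}\rangle}\chi'(yb\,w^{e_2})$ by hand. The delicate point is to confirm that the condition $\gcd(q+1,e_2)=1$ makes $w\mapsto w^{e_2}$ permute each coset appropriately, so that the double sum reduces to $\sum_{y\in\bbbf_q^*}\big(\sum_{w\in\bbbf_{q^2}^*}\chi'(yb\,w)\big)/(q-1)$-type expressions whose total is $-(q-1)$. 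I would also double-check, across all four cases, the bookkeeping of how the factor $y$ moves through the exponents $(q+1)e_1$ and $e_2$, since the $y$-dependence is what links the four regimes to the single-$y$ sum $S_{(e_1,e_2)}$; getting these scalings right is routine but is where a sign or a factor of $q-1$ could slip.
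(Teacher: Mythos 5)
Your overall strategy (split into the four regimes, handle the first cases by direct character-sum computations, and quote Lemma \ref{lematres} for the last case) is the same as the paper's, and cases one and four are fine. But there are two concrete problems in the middle cases. In the case $a^q+a\neq 0$, $b=0$, your claim that $S_{(e_1,e_2)}(ya,0)=-(q+1)$ for each individual $y$ is false in general: writing $d_1=\gcd(q-1,e_1)$, the map $x\mapsto x^{(q+1)e_1}$ covers only the index-$d_1$ subgroup $\langle\delta^{d_1}\rangle$ of $\bbbf_q^*$, each value $(q+1)d_1$ times, so $S_{(e_1,e_2)}(ya,0)=(q+1)d_1\sum_{u\in\langle\delta^{d_1}\rangle}\chi\bigl(y\,\mbox{Tr}_{\bbbf_{q^2}/\bbbf_q}(a)\,u\bigr)$, an incomplete character sum whose value depends on $y$ whenever $d_1>1$. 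Nothing in the hypotheses forces $d_1=1$ (e.g.\ $q=5$, $e_1=2$, $e_2=1$ satisfies both gcd conditions but has $d_1=2$, and there $S_{(e_1,e_2)}(ya,0)$ is not $-6$ for any $y$). The correct total $-(q^2-1)$ is recovered only after summing over $y$, because for each fixed $x$ the inner sum $\sum_{y\in\bbbf_q^*}\chi\bigl(\mbox{Tr}_{\bbbf_{q^2}/\bbbf_q}(a)x^{(q+1)e_1}y\bigr)=-1$; this interchange of the $x$- and $y$-summations is exactly how the paper argues, and your per-$y$ evaluation should be replaced by it.

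The case $a^q+a=0$, $b\neq 0$ is only sketched, and the sketch as written does not reach $-(q-1)$: the quantity $\frac{1}{q-1}\sum_{y}\sum_{w\in\bbbf_{q^2}^*}\chi'(ybw)$ you gesture at equals $-1$, not $-(q-1)$, and the same caveat about $\gcd(q^2-1,e_2)$ possibly exceeding $1$ applies to evaluating $\sum_x\chi'(ybx^{e_2})$ for a single $y$. The missing idea is the one the paper uses: decompose $\bbbf_{q^2}^*$ into the $q+1$ multiplicative cosets $\gamma^i\bbbf_q^*$, use $\gcd(q+1,e_2)=1$ to permute the coset representatives $b\gamma^{ie_2}\mapsto b\gamma^i$, and observe that exactly one of these cosets lies in the kernel of $\mbox{Tr}_{\bbbf_{q^2}/\bbbf_q}$, contributing $(q-1)^2$, while the remaining $q$ cosets each contribute $-(q-1)$, giving $(q-1)^2-q(q-1)=-(q-1)$. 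Without that counting step the case is not actually proved.
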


\begin{proof}
Clearly $T_{(e_1,e_2)}(0,0)=(q-1)(q^2-1)$ and, if $\mbox{Tr}_{\bbbf_{q^2}/\bbbf_{q}}(a) \neq 0$ and $b=0$, then 

\begin{eqnarray}
T_{(e_1,e_2)}(a,0)&=&\sum_{y \in \bbbf_{q}^*} \sum_{x \in \bbbf_{q^2}^*}\chi'(a x^{(q+1)e_1}y) \nonumber \\
&=&\sum_{x \in \bbbf_{q^2}^*} \sum_{y \in \bbbf_{q}^*}\chi(\mbox{Tr}_{\bbbf_{q^2}/\bbbf_{q}}(a) x^{(q+1)e_1}y)=-(q^2-1) \; . \nonumber
\end{eqnarray}

\noindent
On the other hand, if $a^q+a=0$ and $b \neq 0$, then 

\begin{eqnarray}
T_{(e_1,e_2)}(a,b)&=&\sum_{y \in \bbbf_{q}^*} \sum_{x \in \bbbf_{q^2}^*} \chi(\mbox{Tr}_{\bbbf_{q^2}/\bbbf_{q}}(a) x^{(q+1)e_1}y)\chi'(b x^{e_2}y) \nonumber \\
&=&\sum_{x \in \bbbf_{q^2}^*} \sum_{y \in \bbbf_{q}^*} \chi(\mbox{Tr}_{\bbbf_{q^2}/\bbbf_{q}}(b x^{e_2})y) \; \nonumber \\
&=&\sum_{i=0}^q \sum_{x \in \bbbf_{q}^*} \sum_{y \in \bbbf_{q}^*} \chi(x^{e_2}\mbox{Tr}_{\bbbf_{q^2}/\bbbf_{q}}(b \gamma^{i e_2})y) \;,\nonumber
\end{eqnarray}

\noindent
but, since $\gcd(q+1,e_2)=1$, we have

\begin{eqnarray}
T_{(e_1,e_2)}(a,b)&=&\sum_{i=0}^q \sum_{x \in \bbbf_{q}^*} \sum_{y \in \bbbf_{q}^*} \chi(x^{e_2}\mbox{Tr}_{\bbbf_{q^2}/\bbbf_{q}}(b \gamma^{i})y)  \nonumber \\
&=&\sum_{x \in \bbbf_{q}^*} \sum_{y \in \bbbf_{q}^*} \chi(0) + q\sum_{x \in \bbbf_{q}^*} \sum_{y \in \bbbf_{q}^*} \chi(y) \nonumber \\
&=&(q-1)^2 -q(q-1)=-(q-1)   \;.\nonumber
\end{eqnarray}

\noindent
Finally, the proof of the last case comes from previous lemma.
\end{proof}

\section{Formal proof of Theorem \ref{teouno}}

We are now able to present a formal proof of Theorem \ref{teouno}.

\begin{proof}
Part {(A)}: Clearly $(q+1)e_1q \equiv (q+1)e_1 \pmod{q^2-1}$ and, due to the fact that $\gcd(q+1,e_2)=1$, we have $e_2q \not\equiv e_2 \pmod{q^2-1}$, therefore $\deg(h_{(q+1)e_1}(x))=1$ and $\deg(h_{e_2}(x))=2$. Note that if ${\cal C}_{((q+1)e_1)}$ and ${\cal C'}_{((q+1)e_1)}$ are the cyclic codes with the same parity-check polynomial $h_{(q+1)e_1}(x)$, and whose lengths are, respectively, $q^2-1$ and $q-1$, then the weights of all codewords of these two codes differ just by the constant factor $q+1$. Now by using the set of characterizations, for the one-weight irreducible cyclic codes, that was introduced in Theorem 11 of \cite{ocho}, and since $\gcd(\frac{q^1-1}{q-1},(q+1)e_1)=1$, we conclude that ${\cal C'}_{((q+1)e_1)}$ is a cyclic code of length $q-1$, whose nonzero weight is $q-1$. Therefore the nonzero weight of ${\cal C}_{((q+1)e_1)}$ is $q^2-1$. On the other hand, because $\gcd(\frac{q^2-1}{q-1},e_2)=1$, we can conclude, in a similar manner, that $h_{e_2}(x)$ is the parity-check polynomial of a one-weight cyclic code of length $q^2-1$, whose nonzero weight is $q(q-1)$.

Part {(B)}: Clearly, the cyclic code ${\cal C}_{((q+1)e_1,e_2)}$ has length $q^2-1$ and its dimension is $3$ due to Part {(A)}. Let ${\cal A}$ be a fixed subset of $\bbbf_{q^2}^*$ in such a way that $|{\cal A}|=q-1$ and $\{\mbox{Tr}_{\bbbf_{q^2}/\bbbf_q}(a) \:|\: a \in {\cal A}\}=\bbbf_{q}^*$ (observe that if $q$ is even, then the subset ${\cal A}$ must be different from $\bbbf_{q}^*$). Now, for each $a \in {\cal A} \cup \{0\}$ and $ b \in \bbbf_{q^2}$, we define $c(q^2-1,e_1,e_2,a,b)$ as the vector of length $q^2-1$ over $\bbbf_q$, which is given by:

$$(\mbox{Tr}_{\bbbf_{q^2}/\bbbf_q}(a(\gamma^{(q+1)e_1})^i+b(\gamma^{e_2})^i))_{i=0}^{q^2-2}\; .$$

\noindent
Thanks to Delsarte's Theorem (\cite{cero}) it is well known that 

$${\cal C}_{((q+1)e_1,e_2)}=\{ c(q^2-1,e_1,e_2,a,b) \: | \: a \in {\cal A} \cup \{0\} \mbox{ and } b \in \bbbf_{q^2} \} \; .$$ 

\noindent
Thus the Hamming weight of any codeword $c(q^2-1,e_1,e_2,a,b)$, in our cyclic code ${\cal C}_{((q+1)e_1,e_2)}$, will be equal to $q^2-1-Z(a,b)$, where

$$Z(a,b)\!=\!\sharp\{\;i\; | \; \mbox{Tr}_{\bbbf_{q^2}/\bbbf_q}(a\gamma^{(q+1)e_1 i}+b\gamma^{e_2 i})=0, \: 0 \leq i < q^2-1 \} \:.$$

\noindent
If $\chi'$ and $\chi$ are, respectively, the canonical additive characters of $\bbbf_{q^2}$ and $\bbbf_{q}$, then

\begin{eqnarray}
Z(a,b)&=&\frac{1}{q}\sum_{i=0}^{q^2-2} \sum_{y \in \bbbf_q} \chi(\mbox{Tr}_{\bbbf_{q^2}/\bbbf_q}(y(a\gamma^{(q+1)e_1 i}+b\gamma^{e_2 i}))) \nonumber \\
&=&\frac{q^2-1}{q}+\frac{1}{q}\sum_{y \in \bbbf_q^*} \sum_{x \in \bbbf_{q^2}^*} \chi'(ya x^{(q+1)e_1}+yb x^{e_2})  \; , \nonumber
\end{eqnarray}   

\noindent
and, by using the notation of Lemma \ref{lematres}, we have

\begin{equation}\label{eqcinco}
Z(a,b)=\frac{q^2-1}{q}+\frac{1}{q} T_{(e_1,e_2)}(a,b) \; .
\end{equation}

\noindent
But $\gcd(q-1,2e_1-e_2)=1$ and $\gcd(q+1,e_2)=1$; therefore, after applying Corollary \ref{coruno}, we get

\[Z(a,b)=\left\{ \begin{array}{cl}
		q^2-1 & \mbox{ if $a=0$ and $b=0$,} \\
		0    & \mbox{ if $a \in {\cal A}$ and $b=0$,} \\
		q-1      & \mbox{ if $a=0$ and $b \neq 0$,} \\
		q       & \mbox{ if $a \in {\cal A}$ and $b \neq 0$}.
			\end{array}
\right . \]

\noindent
Consequently, the assertion about the weight distribution of ${\cal C}_{((q+1)e_1,e_2)}$ comes now from the fact that the Hamming weight of any codeword in ${\cal C}_{((q+1)e_1,e_2)}$ is equal to $q^2-1-Z(a,b)$, and also due to the fact that $|{\cal A}|=q-1$ and $|\bbbf_{q^2}^*|=q^2-1$.  

Lastly, ${\cal C}_{((q+1)e_1,e_2)}$ is an optimal cyclic code, due to Lemma \ref{lemauno}, and the assertion about the weights of the dual code ${\cal C}_{((q+1)e_1,e_2)}$ can now be proved by means of Table I and the first four identities of Pless (see, for example, pp. 259-260 in \cite{dos}).
\end{proof}

Due to the simplicity of the numerical conditions in Theorem \ref{teouno}, it is possible to compute the total number of different cyclic codes, over $\bbbf_{q}$, of length $q^2-1$ and dimension $3$, that satisfy such conditions. The following result goes in that direction.

\begin{theorem}\label{teotres}
With our notation, let ${\cal N}$ be the number of different cyclic codes, ${\cal C}_{((q+1)e_1,e_2)}$, of length $q^2-1$ and dimension $3$ that satisfy conditions in Theorem \ref{teouno}. Then

\begin{equation}\label{eqseis}
{\cal N} = \frac{\phi(q^2-1)(q-1)}{2} \; ,
\end{equation}

\noindent
where $\phi$ denotes the Euler $\phi$-function.
\end{theorem}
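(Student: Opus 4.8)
The plan is to count the admissible parameter pairs $(e_1,e_2)$ directly and then pass to the codes, using that each code is determined by its parity-check polynomial $h_{(q+1)e_1}(x)h_{e_2}(x)$. By Part (A) of Theorem \ref{teouno} the factor $h_{(q+1)e_1}(x)$ is linear with root $\delta^{-e_1}\in\bbbf_q^*$, so it depends only on $e_1$ modulo $q-1$ and, as $e_1$ runs over $\bbbz/(q-1)\bbbz$, runs bijectively over the $q-1$ distinct linear factors having a root in $\bbbf_q^*$. The factor $h_{e_2}(x)$ has degree $2$ and is determined by the cyclotomic coset $\{e_2,qe_2\}$ modulo $q^2-1$, which genuinely has two elements because $\gcd(q+1,e_2)=1$ forces $(q+1)\nmid e_2$, hence $e_2\not\equiv qe_2\pmod{q^2-1}$. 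First I would verify that both numerical conditions are invariant under $e_1\mapsto e_1+k(q-1)$ and under $e_2\mapsto qe_2$: the former is immediate, and for the latter one uses $q\equiv 1\pmod{q-1}$ (so $2e_1-e_2$ is unchanged modulo $q-1$) together with $\gcd(q+1,q)=1$ (so $\gcd(q+1,qe_2)=\gcd(q+1,e_2)$). Consequently ${\cal N}$ equals one half of the number ${\cal M}$ of admissible pairs $(e_1,e_2)\in\bbbz/(q-1)\bbbz\times\bbbz/(q^2-1)\bbbz$, the factor $\frac{1}{2}$ accounting for the two-element coset.

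Next I would compute ${\cal M}$ by fixing $e_1$ and counting the admissible $e_2$. Writing $u=e_2\bmod(q-1)$ and $v=e_2\bmod(q+1)$, the condition $\gcd(q-1,2e_1-e_2)=1$ depends only on $u$, and since $u\mapsto 2e_1-u$ permutes $\bbbz/(q-1)\bbbz$ it is met by exactly $\phi(q-1)$ residues $u$; likewise $\gcd(q+1,e_2)=1$ selects exactly $\phi(q+1)$ residues $v$. The remaining point is to count the $e_2\bmod(q^2-1)$ realizing a prescribed compatible pair $(u,v)$.

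The hard part will be the case of odd $q$, where $\gcd(q-1,q+1)=2$ and the Chinese Remainder Theorem does not apply verbatim. For even $q$ one has $\gcd(q-1,q+1)=1$, reduction is a bijection $\bbbz/(q^2-1)\bbbz\cong\bbbz/(q-1)\bbbz\times\bbbz/(q+1)\bbbz$, and the number of admissible $e_2$ for each fixed $e_1$ is $\phi(q-1)\phi(q+1)=\phi(q^2-1)$. For odd $q$ the reduction map is two-to-one onto the pairs $(u,v)$ with $u\equiv v\pmod 2$; however both conditions force $2e_1-u$ and $v$ to be odd (as $q\pm 1$ are even), so $u$ and $v$ are both odd and the parity constraint is automatic. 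Thus each admissible $(u,v)$ lifts to exactly two admissible $e_2$, giving $2\phi(q-1)\phi(q+1)$ admissible $e_2$ per $e_1$; invoking the identity $\phi(q^2-1)=2\phi(q-1)\phi(q+1)$ (the standard formula $\phi(mn)=\phi(m)\phi(n)\gcd(m,n)/\phi(\gcd(m,n))$ with $\gcd=2$), this again equals $\phi(q^2-1)$. In both parities the count per $e_1$ is therefore $\phi(q^2-1)$, so summing over the $q-1$ values of $e_1$ yields ${\cal M}=(q-1)\phi(q^2-1)$ and hence ${\cal N}=\frac{1}{2}{\cal M}=\frac{(q-1)\phi(q^2-1)}{2}$, as claimed.
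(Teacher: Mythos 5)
Your argument is correct and is essentially the paper's own count: both enumerate admissible pairs $(e_1,e_2)$, divide by $2$ for the cyclotomic coset $\{e_2,qe_2\}$, and split on the parity of $q$ via the identity relating $\phi(q^2-1)$ to $\phi(q-1)\phi(q+1)$. The only cosmetic difference is that you fix $e_1$ and count admissible $e_2$ (handling $\gcd(q-1,q+1)=2$ by a Chinese Remainder argument), whereas the paper fixes $e_2$ and counts admissible $e_1$ (obtaining the extra factor $2$ for odd $q$ from the two-to-one map $e_1\mapsto 2e_1$ modulo $q-1$).
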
 

\begin{proof}
Since $\deg(h_{e_2}(x))=2$, the total number, ${\cal N}_2$, of different minimal polynomials $h_{e_2}(x)$ that satisfy condition $\gcd(q+1,e_2)=1$ is ${\cal N}_2=\frac{\phi(q+1)(q-1)}{2}$. On the other hand, since $\deg(h_{(q+1)e_1}(x))=1$ we have that, for each integer $e_2$ that satisfies $\gcd(q+1,e_2)=1$, the total number, ${\cal N}_1$, of different minimal polynomials $h_{(q+1)e_1}(x)$ that satisfy condition $\gcd(q-1,2e_1-e_2)=1$ is

\begin{eqnarray}
{\cal N}_1&=&|\{0 \leq e_1 < (q-1) \:|\: \gcd(q-1,2e_1-e_2)=1 \}|  \nonumber \\
&&   \nonumber \\
&=&\left\{ \begin{array}{cl}
		\phi(q-1)  & \mbox{ if $q$ is even,} \\
		2\phi(q-1) & \mbox{ otherwise.}
	   \end{array}
\right .   \nonumber
\end{eqnarray}   

\noindent
But since ${\cal N}={\cal N}_1 {\cal N}_2$, the result now follows from the fact that

\[\phi(q+1)\phi(q-1)=\left\{ \begin{array}{cl}
		\phi(q^2-1)  & \mbox{ if $q$ is even,} \\
		\phi(q^2-1)/2 & \mbox{ otherwise.}
	   \end{array}
\right . \]
\end{proof}

The following are examples related to Theorem \ref{teouno} and Theorem \ref{teotres}.

\begin{example}\label{ejecero}
With the same notation as in Theorem \ref{teouno}, let $q=4$, $e_1=2$ and $e_2=6$. Then $\gcd(q-1,2e_1-e_2)=1$ and $\gcd(q+1,e_2)=1$. Therefore, by Theorem \ref{teouno}, we can be sure that ${\cal C}_{(10,6)}$ is an optimal three-weight cyclic code over $\bbbf_{4}$, of length 15, dimension 3 and weight enumerator polynomial 

\begin{equation}\label{eqsiete}
1+45z^{11}+15z^{12}+3z^{15} \; .
\end{equation}

\noindent
In addition, $B_1=B_2=0$ and $B_3=195$. In fact, the dual code of ${\cal C}_{(10,6)}$ is a $[15,12,3]$ cyclic code which, by the way, has the same parameters as the best known linear code, according to the tables of the best known linear codes maintained by Markus Grassl at http://www.codetables.de/.
\end{example}

\begin{example}\label{ejeuno}
Again, we take $q=4$. Then, owing to Theorem \ref{teotres}, the total number of different cyclic codes, over $\bbbf_{4}$, of length $15$ and dimension $3$ that satisfy conditions of Theorem \ref{teouno} is ${\cal N}=12$. In fact, these cyclic codes are ${\cal C}_{(0,1)}$, ${\cal C}_{(0,2)}$, ${\cal C}_{(0,7)}$, ${\cal C}_{(0,11)}$, ${\cal C}_{(5,1)}$, ${\cal C}_{(5,3)}$, ${\cal C}_{(5,6)}$, ${\cal C}_{(5,7)}$, ${\cal C}_{(10,2)}$, ${\cal C}_{(10,3)}$, ${\cal C}_{(10,6)}$ and  ${\cal C}_{(10,11)}$. Now, through a direct inspection it is interesting to note that all different cyclic codes over $\bbbf_{4}$ of length $15$, dimension $3$ and weight enumerator polynomial as in (\ref{eqsiete}), are exactly those listed before.
\end{example}

\section{Towards the characterization}

Through the last example in the previous section, it can be conjectured that the sufficient numerical conditions in Theorem \ref{teouno} are also the necessary conditions. In fact, this is the real situation and the following result gives us a formal proof of this conjecture.

\begin{theorem}\label{teocinco}
Let ${\cal C}$ be a cyclic code of length $q^2-1$ over a finite field $\bbbf_{q}$. Then, the weight distribution of ${\cal C}$ is given in Table I if and only if its dimension is 3 and there exist two integers, $e_1$ and $e_2$, in such a way that $h_{(q+1)e_1}(x)h_{e_2}(x)$ is the parity-check polynomial of ${\cal C}$, and the two integers satisfy $\gcd(q-1,2e_1-e_2)=1$ and $\gcd(q+1,e_2)=1$.\end{theorem}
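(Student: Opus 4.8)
The forward (``if'') implication is precisely Part (B) of Theorem \ref{teouno}, so the whole of the work lies in the converse: assuming only that ${\cal C}$ has the weight distribution of Table I, I must recover the dimension, the shape of the parity-check polynomial, and the two numerical conditions. First I would read off from Table I that $|{\cal C}|=1+(q-1)(q^2-1)+(q^2-1)+(q-1)=q^3$, so $\dim {\cal C}=3$ and its parity-check polynomial $h(x)$ has degree $3$. Since every $(q^2-1)$-th root of unity lies in $\bbbf_{q^2}$, each irreducible factor of $h(x)$ over $\bbbf_{q}$ has degree dividing $[\bbbf_{q^2}:\bbbf_{q}]=2$; hence a degree-$3$ irreducible factor is impossible and $h(x)$ is either a product of three distinct linear factors or a product of one linear and one quadratic factor.

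The plan is to discard the first alternative by a divisibility argument. If all three nonzeros of ${\cal C}$ lay in $\bbbf_{q}^{*}=\langle\delta\rangle$, then by Delsarte's theorem every codeword would be a sequence $i\mapsto\sum_{j}\lambda_{j}\delta^{f_{j}i}$ with $\lambda_{j}\in\bbbf_{q}$, which is periodic of period dividing $q-1$ and therefore a $(q+1)$-fold repetition of a word of length $q-1$; all weights would then be multiples of $q+1$. As $q(q-1)\equiv2\pmod{q+1}$ is not a multiple of $q+1$ yet occurs in Table I with positive frequency, this case cannot arise. Thus $h(x)=h_{(q+1)e_1}(x)h_{e_2}(x)$ with $h_{(q+1)e_1}(x)$ linear (its root $\delta^{-e_1}\in\bbbf_{q}^{*}$ forcing the exponent to be a multiple of $q+1$) and $h_{e_2}(x)$ quadratic (so $e_2\not\equiv0\pmod{q+1}$). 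It then remains only to force the two gcd conditions.

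The hard part will be proving $\gcd(q+1,e_2)=1$, because Lemmas \ref{lemados} and \ref{lematres} and Corollary \ref{coruno} all presuppose exactly this, and so are unavailable; an independent argument is needed. I would pass to the subcode ${\cal C}_0$ obtained by setting $a=0$, which the Delsarte representation identifies bijectively with the irreducible cyclic code $\{(\mbox{Tr}_{\bbbf_{q^2}/\bbbf_q}(b\gamma^{e_2 i}))_{i}\,|\,b\in\bbbf_{q^2}\}$ of dimension $2$, so it has exactly $q^2-1$ nonzero codewords, all of whose weights belong to the weight set of Table I. Writing $g:=\gcd(q^2-1,e_2)$, the order of $\gamma^{e_2}$ is $(q^2-1)/g$, so every weight of ${\cal C}_0$ is a multiple of $g$; and if $\gcd(q+1,e_2)>1$ then $g>1$. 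Since $q(q-1)-1\equiv-q\pmod{g}$ and $\gcd(g,q)=1$, the weight $q(q-1)-1$ is impossible in ${\cal C}_0$, leaving only $q(q-1)$ and $q^2-1$. A first power-moment count, $\sum_{b\in\bbbf_{q^2}^{*}}Z(0,b)=(q^2-1)(q-1)$ (valid with no gcd hypothesis, since for each $i$ the map $b\mapsto\mbox{Tr}_{\bbbf_{q^2}/\bbbf_q}(b\gamma^{e_2 i})$ has a kernel of size $q$), then forces all $q^2-1$ nonzero weights to equal $q(q-1)$, so ${\cal C}_0$ is a one-weight code. By the characterization of one-weight irreducible cyclic codes (Theorem 11 of \cite{ocho}) this happens if and only if $\gcd(q+1,e_2)=1$, contradicting $\gcd(q+1,e_2)>1$; hence $\gcd(q+1,e_2)=1$.

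Finally, with $\gcd(q+1,e_2)=1$ now in hand, I would force $\gcd(q-1,2e_1-e_2)=1$. Using the representation and the evaluations of $T_{(e_1,e_2)}(a,b)$ for the three non-generic families of pairs $(a,b)$ --- which, as their derivations in the proof of Corollary \ref{coruno} show, depend only on $\gcd(q+1,e_2)=1$ --- the codewords with $a=0,b=0$, with $a=0,b\neq0$, and with $a\in{\cal A},b=0$ have weights $0$, $q(q-1)$ and $q^2-1$ respectively, none equal to $q(q-1)-1$. Consequently every codeword of weight $q(q-1)-1$ must be generic, i.e.\ have $a^q+a\neq0$ and $b\neq0$. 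But if $\gcd(q-1,2e_1-e_2)=d>1$, the contrapositive of Lemma \ref{lematres} gives $T_{(e_1,e_2)}(a,b)\neq1$ for every such generic pair, whence by (\ref{eqcinco}) $Z(a,b)\neq q$ and the weight differs from $q(q-1)-1$; then ${\cal C}$ would contain no word of weight $q(q-1)-1$, contradicting its frequency $(q-1)(q^2-1)>0$ in Table I. Therefore $\gcd(q-1,2e_1-e_2)=1$, which completes the converse and hence the characterization.
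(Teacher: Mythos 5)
Your proof is correct, and it reaches the paper's conclusion by a genuinely different route at the two structural steps. For ruling out a parity-check polynomial that splits into three linear factors, the paper counts frequencies (three distinct one-weight constituents would force at least $3(q-1)$ words of weight $q^2-1$, versus the $q-1$ in Table I), while you observe that all weights would be multiples of $q+1$, which $q(q-1)\equiv 2\pmod{q+1}$ is not; both work, and yours also disposes of the irreducible-cubic case cleanly via degrees dividing $[\bbbf_{q^2}:\bbbf_q]=2$ rather than via $(q^2-1)\nmid(q^3-1)$. The more substantial divergence is in proving $\gcd(q+1,e_2)=1$: the paper shows that ${\cal C}_{(e_2)}$ would have to be a two-weight irreducible cyclic code with weights $q(q-1)$ and $q(q-1)-1$ and then invokes the Schmidt--White-type characterization (Theorem \ref{teoseis}, via \cite{seis} and \cite{nuevebis}) to see that no two-weight irreducible cyclic code of these parameters can have $q(q-1)$ as a weight. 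You instead note that every weight of ${\cal C}_{(e_2)}$ is divisible by $g=\gcd(q^2-1,e_2)>1$, which excludes $q(q-1)-1\equiv -q\not\equiv 0\pmod g$, and then use the first power moment $\sum_{b\neq 0}Z(0,b)=(q^2-1)(q-1)$ to force all weights down to $q(q-1)$, contradicting the one-weight characterization of \cite{ocho}. Your argument is more elementary and self-contained --- it bypasses Theorem \ref{teoseis} entirely, which the paper presents as its ``fundamental tool'' --- at the cost of losing the conceptual link to the two-weight classification; the paper's route, conversely, explains \emph{why} the characterization was found. The final step (forcing $\gcd(q-1,2e_1-e_2)=1$ from the positive frequency of weight $q(q-1)-1$ via equation (\ref{eqcinco}) and Lemma \ref{lematres}) coincides with the paper's, and your explicit remark that the non-generic evaluations in Corollary \ref{coruno} use only $\gcd(q+1,e_2)=1$ is a point the paper leaves implicit.
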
 

\begin{proof}
Suppose that ${\cal C}$ is a cyclic code of length $q^2-1$ over a finite field $\bbbf_{q}$, whose weight distribution is given in Table I. Through the sum of the frequencies of such table, it is easy to see that ${\cal C}$ must be a cyclic code of dimension 3. Consequently, the degree of the parity-check polynomial $h(x)$, of ${\cal C}$, must be equal to $3$. Since $(q^2-1) \nmid (q^3-1)$, the code ${\cal C}$ cannot be an irreducible cyclic code of length $q^2-1$ and dimension $3$. Therefore the parity-check polynomial $h(x)$ must be reducible. As was explained in the proof of Part (A) of Theorem \ref{teouno}, a cyclic code of length $q^2-1$ and dimension $1$ is just a one-weight irreducible cyclic code over $\bbbf_q$, whose nonzero weight is $q^2-1$. Thus, if $h(x)$ is the product of three polynomials of degree $1$, then ${\cal C}$ will correspond to the span of the union of three different one-weight irreducible cyclic codes (seeing them as three different subspaces of $\bbbf_q^{q^2-1}$), and therefore the frequency of the nonzero weight of $q^2-1$, in Table I, should be at least $3(q-1)$. Since this is not the situation for Table I, the polynomial $h(x)$ must be the product of two polynomials, one of them of degree $1$ and the other one of degree $2$. Seeing such polynomials as minimal polynomials over $\bbbf_{q^2}$, we have that there must exist two integers $e_1$ and $e_2$ in such a way that $h(x)=h_{(q+1)e_1}(x)h_{e_2}(x)$. 

Now, we are going to prove that $\gcd(q+1,e_2)=1$. Let ${\cal C}_{((q+1)e_1)}$ and ${\cal C}_{(e_2)}$ be the cyclic codes of length $q^2-1$ over $\bbbf_{q}$, whose parity-check polynomials are, respectively, $h_{(q+1)e_1}(x)$ and $h_{e_2}(x)$. If $\gcd(q+1,e_2)=u > 1$, then, due to the set characterizations for the one-weight irreducible cyclic codes, that was introduced in \cite{ocho}, ${\cal C}_{(e_2)}$ must have at least two nonzero weights. Since ${\cal C}_{((q+1)e_1)}$ is a one-weight irreducible cyclic code of length $q^2-1$, with nonzero weight $q^2-1$, and due to the fact that ${\cal C}$ is the span of the union of ${\cal C}_{((q+1)e_1)}$ and ${\cal C}_{(e_2)}$, we have that none of the nonzero weights of ${\cal C}_{(e_2)}$ can be equal to $q^2-1$. But in Table I there are just 3 different nonzero weights and one of them is equal to $q^2-1$. Thus the conclusion here is that if $\gcd(q+1,e_2)=u > 1$, then ${\cal C}_{(e_2)}$ will correspond to a two-weight irreducible cyclic code, over $\bbbf_{q}$, of length $q^2-1$ and dimension $2$, whose nonzero weights are $q(q-1)$ and $q(q-1)-1$. Fortunately for us, simple necessary and sufficient numerical conditions for an irreducible cyclic code to have at most two weights were presented in the remarkable work of \cite{seis}. Despite the fact that such characterization is just for all two-weight irreducible cyclic codes over a prime field, the authors provided all the required clues to extend their characterization to {\em any} finite field. Thus, taking into consideration these clues it is possible to obtain the following characterization for all the two-weight irreducible cyclic codes of length $q^2-1$ and dimension $2$ over any finite field (see Theorem 6 and its proof in \cite{nuevebis}).

\begin{center}
TABLE II \\
{\em Weight distribution of a two-weight code ${\cal C}_{(e)}$. \\}
Here $\varepsilon=\pm 1$ is determined by $r p^{s\theta} \equiv \varepsilon \pmod{u}$.
\end{center}
\begin{center}
\begin{tabular}{|c|c|} \hline
{\bf Weight} & $\;$ {\bf Frequency} $\;$\\ \hline \hline
0 & 1 \\ \hline
$\; \frac{q-1}{q}(q^2 - r \varepsilon p^{s\theta}) \;$ & $\frac{(q^2-1)(u-r)}{u}$ \\ \hline
$\; \frac{q-1}{q}(q^2 + (u-r) \varepsilon p^{s\theta}) \;$ & $\frac{(q^2-1)r}{u}$ \\ \hline
\end{tabular}
\end{center}

\begin{theorem}\label{teoseis}
Let $p$, $t$ and $q$ be positive integers in such a way that $p$ is a prime number and $q=p^t$. For any integer $e$, let ${\cal C}_{(e)}$ be the irreducible cyclic code, over $\bbbf_{q}$, of length $q^2-1$, whose parity-check polynomial is $h_e(x)$, and suppose that $\deg(h_e(x))=2$. For $u=\gcd(q+1,e)$, let $f$ and $s$ be the two integers in such a way that $2t=fs$, with $f:=\mbox{ord}_{u}(p)$ (that is, $f$ is {\em the multiplicative order of $p$ modulo $u$}). For a positive integer $x$, let $S_p(x)$ denote the sum of the $p$-digits of $x$. Define

\[\theta(u,p)=\frac{1}{p-1} \min \left\{ S_p\left(\frac{j(p^f-1)}{u}\right) \; | \; 1 \leq j < u \right\} \;,\]

\noindent
and fix $\theta=\theta(u,p)$. Then the irreducible cyclic code ${\cal C}_{(e)}$ has the weight distribution given in Table II if and only if $u>1$ and there exists a positive integer $r$ satisfying 

\begin{eqnarray}
&& r | (u-1) \nonumber \\
&& r p^{s\theta} \equiv \pm 1 \pmod{u}  \nonumber \\
&& r (u-r)=(u-1) p^{s(f-2\theta)} \; .  \nonumber
\end{eqnarray}
\end{theorem} 

Now, by observing Table II and by noting that $r$ and $u-r$ cannot be zero in the previous theorem, we have that the nonzero weights of any two-weight irreducible cyclic code of length $q^2-1$ and dimension $2$ can never be equal to $q(q-1)$. But this is a contradiction, because we already conclude that the nonzero weights of ${\cal C}_{(e_2)}$ are $q(q-1)$ and $q(q-1)-1$. Therefore ${\cal C}_{(e_2)}$ cannot be a two-weight irreducible cyclic code, and in consequence, $\gcd(q+1,e_2)=1$.

It remains to prove that $\gcd(q-1,2e_1-e_2)=1$. If $\gcd(q+1,e_2)=1$, then, once again, as was explained in proof of Part (A) of Theorem \ref{teouno}, ${\cal C}_{(e_2)}$ will correspond to a one-weight irreducible cyclic code of length $q^2-1$ and dimension $2$, whose nonzero weight is $q(q-1)$. Since the frequency of such nonzero weight is $q^2-1=|\bbbf_{q^2}^*|$, in Table I, we have that a codeword, $c$, in ${\cal C}$ will have Hamming weight $q(q-1)-1$ if and only if $c=c_1+c_2$, where $c_1$ and $c_2$ are, respectively, two nonzero codewords in ${\cal C}_{((q+1)e_1)}$ and ${\cal C}_{(e_2)}$. But if $c_1$ and $c_2$ are nonzero codewords in ${\cal C}_{((q+1)e_1)}$ and ${\cal C}_{(e_2)}$, then there must exist two finite field elements $a$ and $b$ in $\bbbf_{q^2}$, with $a^q+a \neq 0$, $b \neq 0$, in such a way that the number of zero entries, $Z(a,b)$, in codeword $c$, can be computed by means of (\ref{eqcinco}). Under these circumstances, codeword $c$ will have Hamming weight $q(q-1)-1$ if and only if $T_{(e_1,e_2)}(a,b)=1$, and due to Lemma \ref{lematres}, this can only be possible if and only if $\gcd(q-1,2e_1-e_2)=1$. 

Finally, the proof of the converse is just a part of the proof of Theorem \ref{teouno} that was already given in previous section.
\end{proof}

As a direct consequence of Theorems \ref{teotres} and \ref{teocinco}, we have the following result.

\begin{corollary}\label{cordos}
Let ${\cal N}$ be the number of different cyclic codes of length $q^2-1$, over $\bbbf_{q}$, whose weight distribution is given in Table I. Then ${\cal N}$ is given by (\ref{eqseis}).
\end{corollary}

\section{Conclusions}

In this work we presented a characterization of a class of optimal three-weight cyclic codes of length $q^2-1$ and dimension 3, over any finite field $\bbbf_{q}$. The codes under this characterization are, indeed, optimal in the sense that their lengths reach the Griesmer lower bound for linear codes. In addition, we also found the parameters for the dual code of any cyclic code in this class. In fact, throughout several studied examples, it seems that such dual codes have always the same parameters as the best known linear codes. As we saw in Example \ref{ejeuno}, it is easy to find all cyclic codes over a fixed finite field $\bbbf_{q}$ of length $q^2-1$ and dimension $3$ that satisfy the two conditions of Theorem \ref{teouno}. But due to Theorem \ref{teocinco} we can be sure that these cyclic codes will be all optimal three-weight cyclic codes of length $q^2-1$, whose weight distribution is given in Table I. As a complement of this work, we believe that it could be interesting the study of the family cyclic codes of length $q^2-1$, whose parity-check polynomial is in the form of $h(x)=h_{(q+1)e_1}(x)h_{e_2}(x)$, where the integers $e_1$ and $e_2$ satisfy $\gcd(q-1,2e_1-e_2) > 1$ and $\gcd(q+1,e_2)=1$.

\section*{References}

\end{document}